\numberwithin{equation}{section}
\newcommand{\bR}{{\mathbb R}}
\newcommand{\bC}{{\mathbb C}}
\newcommand{\kB}{{\mathcal B}}
\newcommand{\kC}{{\mathcal C}}
\newcommand{\kD}{{\mathcal D}}
\newcommand{\kH}{{\mathcal H}}
\newcommand{\kK}{{\mathcal K}}
\newcommand{\kN}{{\mathcal N}}
\newcommand{\kQ}{{\mathcal Q}}
\newcommand{\kS}{{\mathcal S}}
\newcommand{\gotH}{{\mathfrak H}}
\newcommand{\gotK}{{\mathfrak K}}
\newcommand{\gd}{{\delta}}
\newcommand{\gga}{{\gamma}}
\newcommand{\gG}{{\Gamma}}
\newcommand{\gl}{{\lambda}}
\newcommand{\gO}{{\Omega}}
\newcommand{\gP}{{\Pi}}
\newcommand{\gS}{{\Sigma}}
\newcommand{\gs}{{\sigma}}
\newcommand{\gT}{{\Theta}}
\newcommand{\slim}{\,\mbox{\rm s-}\hspace{-2pt} \lim}
\newcommand{\real}{{\Re{\mathrm e\,}}}
\newcommand{\imag}{{\Im{\mathrm m\,}}}
\newcommand{\dom}{{\mathrm{dom\,}}}
\newcommand{\ran}{{\mathrm{ran\,}}}
\newcommand{\spa}{{\mathrm{span}}}
\newcommand{\clospa}{{\mathrm{clospan}}}
\newcommand{\wh}[1]{{\widehat{#1}}}
\newtheorem{thm}{Theorem}[section]
\newtheorem{prop}[thm]{Proposition}
\newtheorem{lem}[thm]{Lemma}
\newtheorem{cor}[thm]{Corollary}
\newtheorem{defn}[thm]{Definition}
\newtheorem{rem}[thm]{Remark}
\newcommand{\ba}{\begin{array}}
\newcommand{\ea}{\end{array}}
\newcommand{\bea}{\begin{eqnarray}}
\newcommand{\eea}{\end{eqnarray}}
\newcommand{\bead}{\begin{eqnarray*}}
\newcommand{\eead}{\end{eqnarray*}}
\newcommand{\be}{\begin{equation}}
\newcommand{\ee}{\end{equation}}
\newcommand{\bed}{\begin{displaymath}}
\newcommand{\eed}{\end{displaymath}}
\newcommand{\bl}{\begin{lem}}
\newcommand{\el}{\end{lem}}
\newcommand{\bp}{\begin{prop}}
\newcommand{\ep}{\end{prop}}
\newcommand{\bt}{\begin{thm}}
\newcommand{\et}{\end{thm}}
\newcommand{\Label}{\label}
\newcommand{\bc}{\begin{cor}}
\newcommand{\ec}{\end{cor}}
\newcommand{\la}{\Label}
\newcommand{\bd}{\begin{defn}}
\newcommand{\ed}{\end{defn}}
\def\cprime{$'$} \def\cprime{$'$} \def\cprime{$'$} \def\cprime{$'$}
  \def\lfhook#1{\setbox0=\hbox{#1}{\ooalign{\hidewidth
  \lower1.5ex\hbox{'}\hidewidth\crcr\unhbox0}}} \def\cprime{$'$}
  \def\cprime{$'$} \def\cprime{$'$}
   \def\sH{{\mathfrak H}}
      \def\dC{{\mathbb C}}
   \def\dN{{\mathbb N}}   
      \def\dR{{\mathbb R}}
   \def\cB{{\mathcal B}}   \def\cC{{\mathcal C}}
\def\cD{{\mathcal D}}      \def\cF{{\mathcal F}}
   \def\cH{{\mathcal H}}   
   \def\cN{{\mathcal N}}
\def\mul{{\text{\rm mul\,}}}
\newcommand{\Imag}{\mbox{{\rm Im}}\,}
\newenvironment{proof}%
{\begin{sloppypar}\noindent{\bf Proof.}}%
{\hspace*{\fill}$\square$\end{sloppypar}\bigskip}
\author{Jussi Behrndt\thanks{E-mail: behrndt@math.tu-berlin.de}\\
Technische Universit\"{a}t Berlin\\
Institut f\"{u}r Mathematik\\
MA 6--4, Stra\ss e des 17.\ Juni 136\\
D--10623 Berlin, Germany
\and
Mark M. Malamud\thanks{E-mail: mmm@telenet.dn.ua}\\
Donetsk National University\\
Department of Mathematics\\
Universitetskaya 24\\
83055 Donetsk, Ukraine
\and
Hagen Neidhardt\thanks{E-mail: neidhardt@wias-berlin.de}\\
Weierstra{\ss}-Institut f\"ur\\
Angewandte Analysis und Stochastik\\ 
Mohrenstr. 39\\
D-10117 Berlin, Germany
}
\title{Finite Rank Perturbations, Scattering Matrices and
Inverse Problems}
\date{February 19, 2009}    % Datum fixieren
\begin{document}

\maketitle

Dedicated to the memory of our friend Peter Jonas (18.7.1941 - 18.7.2007)

\begin{abstract}
\noindent
In this paper the
scattering matrix of a scattering system consisting of two selfadjoint operators
with finite dimensional resolvent difference 
is expressed in terms of a matrix Nevanlinna function.
The problem is embedded into an extension theoretic framework and the theory
of boundary triplets and associated Weyl functions for (in general nondensely defined)
symmetric operators is applied. The representation 
results are extended to dissipative scattering systems and 
an explicit solution of an inverse scattering
problem for the Lax-Phillips scattering matrix is presented.\\

\noindent
{\bf Subject classification:} Primary 47A40; Secondary  81U40, 47A55, 47B44\\

\noindent
{\bf Keywords:} 
Scattering system, scattering matrix, boundary triplet,
Weyl function, dissipative operator, Lax-Phillips scattering
\end{abstract}

\section{Introduction}

Let $A$ and $B$ be selfadjoint operators in a Hilbert space $\gotH$ and
assume that the difference of the resolvents
\begin{equation}\label{intdiff}
(B-\lambda)^{-1}-(A-\lambda)^{-1},\qquad\lambda\in\rho(A)\cap\rho(B),
\end{equation}
is a trace class operator. Then it is well known that the wave operators
$W_\pm(B,A)$ exist; they are isometries mapping the absolutely
continuous subspace of $A$ onto that of $B$. The scattering operator
$S_{AB}:=W_+(B,A)^*W_-(B,A)$ of the scattering system $\{A,B\}$
commutes with $A$ and is unitary on the absolutely continuous subspace
of $A$. Therefore $S_{AB}$ is unitarily equivalent to a multiplication operator induced by a family
of unitary operators $\{S_{AB}(\lambda)\}_{\lambda\in\dR}$ in the spectral representation of
the absolutely continuous part of $A$. The family
$\{S_{AB}(\lambda)\}_{\lambda\in\dR}$ is called the scattering matrix
of $\{A,B\}$.

One of the main objectives of this paper is to represent the scattering
matrix of the scattering system
$\{A,B\}$ with the help of an associated Nevanlinna function $M(\cdot)$. We
restrict ourselves to the special case of finite
rank perturbations in resolvent sense, i.e., it is assumed that the
difference of the resolvents in \eqref{intdiff} is a rank $n$
operator, where $n<\infty$. In this case the Nevanlinna function $M(\cdot)$ will be an $n\times n$-matrix
function and it will be shown in Theorem~\ref{scattering}
that the scattering matrix $\{S_{AB}(\lambda)\}_{\lambda\in\dR}$ is given by
\begin{equation}\label{scatint}
S_{AB}(\lambda)=I-2i\sqrt{\imag
(M(\lambda+i0))}\,M(\lambda+i0)^{-1}\,\sqrt{\imag (M(\lambda+i0))}
\end{equation}
for a.e. $\lambda\in\dR$. This representation is a generalization of a
recent result of the authors from \cite{BMN06a} and
an earlier different (unitarily equivalent) expression found by
V.M.~Adamyan and B.S.~Pavlov in \cite{AdamP86}.
The formula \eqref{scatint} is obtained by embedding the scattering
problem into an extension theoretic
framework. More precisely, we consider the (in general nondensely defined)
closed symmetric operator $S=A\cap B$ which has finite equal deficiency
indices $(n,n)$. The adjoint $S^*$ is defined
in the sense of linear relations and a so-called boundary triplet
$\Pi=\{\dC^n,\Gamma_0,\Gamma_1\}$ for $S^*$ is chosen in
such a way that the selfadjoint extensions of $S$ corresponding to the
boundary mappings $\Gamma_0$ and $\Gamma_1$
coincide with $A$ and $B$, respectively. The function $M(\cdot)$ in
\eqref{scatint} is the Weyl function
associated to this boundary triplet -- an abstract analogon of the
classical Titchmarsh-Weyl $m$-function from singular Sturm-Liouville
theory -- and contains the spectral information of
the operator $A$.

Besides selfadjoint scattering systems we also consider so-called
maximal dissipative scattering systems $\{A,B\}$, that is, $A$
is still a selfadjoint operator in $\gotH$ but $B$ is only assumed to be maximal
dissipative, i.e., $\imag (Bf,f)\leq 0$ and the spectrum of $B$ is contained in $\dC_-\cup\dR$.
As above we treat only the case of finite rank perturbations in
resolvent sense. Following \cite{BMN06b,Pav76,Pav96,FoiasN70}
a minimal selfadjoint dilation $L$ of $B$ in the direct sum
$\gotH\oplus L^2(\dR,\dC^n)$ is constructed and a natural
larger selfadjoint scattering system $\{K,L\}$ in $\gotH\oplus
L^2(\dR,\dC^n)$ is considered. From Theorem~\ref{scattering} and Theorem~\ref{III.2} we
obtain a representation of the scattering matrix $\{S_{KL}(\lambda)\}_{\lambda\in\dR}$
which is closely related to the representations
found earlier in \cite{BMN06b}.  We emphasize that the lower right corner of
$\{S_{KL}(\lambda)\}_{\lambda\in\dR}$ in Proposition~\ref{sklprop} can be interpreted
as the Lax-Phillips scattering matrix $\{S^{LP}(\lambda)\}_{\lambda\in\dR}$ of the
Lax-Phillips scattering system $\{L,\cD_-,\cD_+\}$, where the incoming and outcoming subspaces $\cD_-$
and $\cD_+$ are $L^2(\dR_-,\dC^n)$ and $L^2(\dR_+,\dC^n)$, respectively.
This also implies the well known relation
$S^{LP}(\lambda)=\gT_B(\lambda-i0)^*$ between the Lax-Phillips
scattering matrix and the
characteristic function $\gT_B(\cdot)$ of the maximal dissipative operator $B$
found by V.M. Adamyan and D.Z. Arov in \cite{AA65a,AA65b,AA66a,AA66b}.

As an application of our approach on finite rank perturbations and
maximal dissipative scattering systems we prove an
inverse result in Section~\ref{invsec}. Let $W(\cdot)$ be a purely contractive analytic
matrix function on $\dC_+$. Under some mild additional assumptions it is shown 
in Theorem~\ref{invthm} that the limit $\{W(\lambda+i0)\}_{\lambda\in\dR}$ can be regarded as the
Lax-Phillips scattering matrix of a suitable chosen Lax-Phillips scattering system.
We point out that this statement can be
obtained immediately in a more abstract and general form by combining the results of B.~Sz.-Nagy 
and C.~Foia{\lfhook{s}} in \cite[Section VI]{FoiasN70} with the results of  
V.M.~Adamyan and D.Z.~Arov in \cite{AA65a,AA65b,AA66a,AA66b}. However, 
our approach leads to a more explicit solution of the inverse problem, in particular,
we find a maximal dissipative multiplication operator $B$ in an $L^2$-space and
a minimal selfadjoint dilation $L$ of $B$ such that the Lax-Phillips scattering matrix of $\{L,\cD_-,\cD_+\}$
coincides with the limit of the given purely contractive analytic matrix function $W(\cdot)$; 
cf. Corollary~\ref{invcor}.

The paper is organized as follows. In Section~\ref{twosec} we give a brief
introduction in the theory of boundary triplets for 
(in general nondensely defined) closed symmetric operators. In particular, we show how a boundary triplet for the 
intersection $S=A\cap B$ of two selfadjoint operators $A$ and $B$ with a finite dimensional resolvent
difference can be chosen. Section~\ref{threesec}
is devoted to the representation of the scattering matrix for a scattering system 
$\{A,B\}$ with finite rank resolvent difference and in 
Section~\ref{sec4} the results are extended to the case where the operator $B$
is only maximal dissipative. With the help of these results we propose a solution for the 
inverse scattering problem in Section~\ref{invsec}. For the convenience of the
reader we add an Appendix on direct integrals, spectral representations
and scattering matrices.

{\bf Notation.} The Hilbert spaces in this paper are usually denoted by $\gotH$, $\gotK$ and $\kH$; they
are all assumed to be separable. 
The symbols $\spa\{\cdot\}$ and $\clospa\{\cdot\}$ are used for 
the linear span and closed linear span, respectively, of a set. 
The algebra of everywhere defined bounded linear operators on a Hilbert space $\gotH$ with values in a Hilbert space $\gotK$ is 
denoted by $[\gotH,\gotK]$; we write $[\gotH]$ if $\gotK=\gotH$. By $\cF_n(\gotH)$ we denote the subset of $[\gotH]$ that consists of linear operators
with range of dimension $n\in\dN$. The absolutely continuous part of a selfadjoint
operator $A$ in $\gotH$ is denoted by $A^{ac}$, the corresponding subspace by
$\gotH^{ac}(A)$. The symbols $\rho(\cdot)$, $\gs(\cdot)$, 
$\gs_p(\cdot)$, $\gs_c(\cdot)$, $\gs_r(\cdot)$
stand for the resolvent set, the spectrum, the point, continuous and
residual spectrum, respectively. By $E(\cdot)$ and
$\gS(\cdot)$ we denote operator-valued measures defined on the algebra of
Borel sets $\kB(\dR)$ of the real axis $\dR$. Usually, the symbol
$E(\cdot)$ is reserved for orthogonal operator-valued measures.

\section{Selfadjoint and maximal dissipative 
extensions of nondensely defined symmetric operators}\label{twosec}

\subsection{Linear relations}\label{linrel}

Let $(\gotH,(\cdot,\cdot))$ be a separable Hilbert space. A
(closed) linear relation $T$ in $\gotH$ is a (closed) linear
subspace of the Cartesian product space $\gotH\times\gotH$. The set of closed linear relations
in $\gotH$ is denoted by $\widetilde\cC(\gotH)$.
Linear operators in $\gotH$ will always be identified with linear 
relations via their graphs. The elements
of a linear relation $T$ are pairs denoted by $\wh f=\{f,f^\prime\}\in T$,
$f,f^\prime\in\sH$, and the {\it domain},
{\it kernel}, {\it range}, and the {\it multi-valued part} of $T$ are
defined as
\begin{equation*}
 \begin{split}
\dom(T)&=\{\,f\in\sH:\{f,f^\prime\}\in T\,\},\quad\,\,\,\,
\ker(T)=\{\,f\in\sH:\{f,0\}\in T\,\},\\
\ran (T)&=\{\,f^\prime\in\sH:\{f,f^\prime\}\in T\,\},\quad
\mul(T)=\{\, f^\prime\in\sH:\{0,f^\prime\}\in T\,\},
 \end{split}
\end{equation*}
respectively. Note that $T$ is an operator if and only if $\mul (T)=\{0\}$. 
A point $\lambda$ belongs to the {\it resolvent set} $\rho(T)$ of a
closed linear relation $T$ if $(T-\lambda)^{-1}$ is 
an everywhere defined bounded operator in $\gotH$.
The {\it spectrum} $\sigma(T)$ of $T$ is the complement of $\rho(T)$ in $\dC$.

A linear relation $T$ in $\sH$ is called {\it dissipative} if
$\Imag(f^\prime,f)\leq 0$  holds for all
$\{f,f^\prime\}\in T$. A dissipative relation $T$ 
is said to be {\it maximal dissipative} if there exists no proper dissipative
extension of $T$ in $\sH$. It can be shown that a dissipative relation $T$ is maximal
dissipative if and only if $\dC_+\subset\rho(T)$ holds.

The {\it adjoint} $T^*$ of a linear relation $T$ in $\gotH$ is a
closed linear relation in $\sH$ defined by
\begin{equation}\label{adjrel}
T^*:=\bigl\{\, \{g,g^\prime\}:(f^\prime,g)=(f,g^\prime)\,\,\text{for
all}\,\,\{f,f^\prime\} \in T \,\bigr\}.
\end{equation}
Observe that this definition extends the usual definition of the
adjoint operator and that $\mul(T^*)=(\dom(T))^\bot$ holds. 
In particular, $T^*$ is an operator if and only if $T$ is densely defined.
A linear relation $T$ in $\sH$ is called
\textit{symmetric} (\textit{selfadjoint}) if $T\subset T^*$
($T=T^*$, respectively). It follows from the polarization identity
that $T$ is symmetric if and only if $(f^\prime,f)\in\dR$ for all
$\{f,f^\prime\}\in T$.

A (possibly nondensely defined) symmetric operator $S$ in $\sH$ is said to be {\it simple}
if there is no nontrivial subspace in $\sH$ which reduces $S$ to a selfadjoint operator.
It is well known that every symmetric operator $S$ can be written as the direct orthogonal sum
$\widehat S\oplus S_s$ of a simple symmetric operator $\widehat S$ in
the Hilbert space
\begin{equation}\label{simpl}
\widehat\sH=\clospa\bigl\{\ker(S^*-\lambda):\lambda\in\dC\backslash\dR\bigr\}
\end{equation}
and a selfadjoint operator $S_s$ in $\sH\ominus\widehat \sH$.

\subsection{Boundary triplets for nondensely defined symmetric operators}

Let in the following $S$ be a (not necessarily densely defined) closed symmetric operator in
the separable Hilbert space $\gotH$ with equal deficiency indices
\begin{equation*}
n_\pm(S)=\dim\bigl(\ran(S\pm i)^\bot\bigr)=\dim\bigl(\ker(S^*\mp i)\bigr)\leq\infty.
\end{equation*}
If $\dom(S)$ is not dense in $\gotH$ the adjoint $S^*$ exists only in the sense of linear relations 
and is defined as in \eqref{adjrel}. Therefore, if $S$ is not densely defined 
the closed extensions $S^\prime \subset S^*$ 
of $S$ in $\gotH$ may have nontrivial multi-valued parts. 
However, the operator $S$ admits also
closed extensions in $\gotH$ which are operators. We will use the 
concept of boundary triplets for the description of the
closed extensions $S^\prime \subset S^*$ of $S$ in $\gotH$; 
see, e.g., \cite{BGP08,DM87a,DM91,DM95,GG91,Mal92}.
This concept also provides a convenient criterion to check 
whether $S^\prime$ is an operator or not; cf. \eqref{check}.
\begin{defn}
{\rm
A triplet $\gP = \{\kH,\gG_0,\gG_1\}$ is called a {\rm boundary triplet} for $S^*$ if $\kH$ is a
Hilbert space and $\Gamma_0,\Gamma_1:S^* \rightarrow\kH$ are
linear mappings such that the abstract Green's identity 
\begin{equation*}
(f^\prime,g) - (f,g^\prime) = (\gG_1\hat f,\gG_0\hat g) -
(\gG_0\hat f,\gG_1\hat g)
\end{equation*}
holds for all $\widehat f=\{f,f^\prime\}$, $\widehat g=\{g,g^\prime\}\in S^*$ and the mapping
 $\gG:=(\Gamma_0,\Gamma_1)^\top: S^*\rightarrow \kH \oplus \kH$ is
 surjective.
}
\end{defn}

We refer to \cite{DM91,DM95,GG91,Mal92} for a detailed study of
boundary triplets and recall only some important facts. First of all
a boundary triplet $\Pi=\{\kH,\gG_0,\gG_1\}$ for $S^*$ exists (but is not unique) since
the deficiency indices $n_\pm(S)$ of $S$ are assumed to be equal.
Then $n_\pm(S) = \dim\kH$ holds. A standard construction of a boundary triplet
will be given in the proof of Proposition~\ref{pertupropd}. 

Let  $\Pi=\{\kH,\gG_0,\gG_1\}$ be a  
boundary triplet for $S^*$ and let ${\rm Ext}(S)$ be the set of all closed extensions $S^\prime\subset S^*$ of $S$. 
Then $S=\ker(\Gamma)$ and the mapping
\begin{equation}\label{bij}
\Theta\mapsto S_\Theta:= \Gamma^{-1}\Theta=
\bigl\{\widehat f\in S^*: \{\Gamma_0\widehat f,\Gamma_1\widehat f\}\in\Theta\bigr\}
\end{equation}
establishes a bijective correspondence between the set 
$\widetilde\kC(\kH)$ of closed linear relations in $\gotH$ 
and the set of closed extensions $S_\Theta\in {\rm Ext}(S)$ of $S$. 
We note that the right-hand side of \eqref{bij} can also be 
written as $\ker(\Gamma_1-\Theta\Gamma_0)$ where the sum and 
product is interpreted in the sense of linear relations.
Since $(S_\Theta)^*=  S_{\Theta^*}$ holds for every 
$\Theta\in\widetilde\kC(\kH)$ it follows that
$S_\Theta$ is  symmetric (selfadjoint) if and only if 
$\Theta$ is symmetric (selfadjoint, respectively). In particular, 
the extensions $A_0:=\ker(\Gamma_0)$ and $A_1:=\ker(\Gamma_1)$ are
selfadjoint. The selfadjoint operator or relation $A_0$ 
will often play the role of a fixed selfadjoint 
extension of $S$ in $\gotH$. Furthermore, an extension 
$S_\Theta\in {\rm Ext}(S)$ is dissipative (maximal dissipative) if 
and only if $\Theta$ is dissipative (maximal dissipative,
respectively). We note that $S_\Theta$ in \eqref{bij} is an operator
if and only if
\begin{equation}\label{check}
 \Theta\cap\bigl\{\{\Gamma_0\widehat f,\Gamma_1\widehat f\}:
\widehat f=\{0,f^\prime\}\in S^*\bigr\}=\{0\}.
\end{equation}
The following proposition is a consequence of 
the basic properties of boundary triplets and results from \cite{DM91,DM95,Mal92}. 
Since it plays an important role
in this paper we give a complete proof for the convenience of the
reader. 
We also note that the statement remains true if $A$ and $B$
are linear relations instead of operators. Recall that $\cF_n(\gotH)$, $n\in\dN$, is the set of finite dimensional
operators in $\gotH$ with ranges of dimension $n$, i.e., 
\begin{equation*}
\cF_n(\gotH)=\bigl\{T\in[\gotH]:\dim(\ran(T))=n\bigr\}.
\end{equation*}
\begin{prop}\label{pertupropd}
Let $A$ be a selfadjoint operator and let $B$ be a 
maximal dissipative operator in $\gotH$. Assume that
\begin{equation*}
(B-\lambda)^{-1}-(A-\lambda)^{-1}\in\cF_n(\gotH)
\end{equation*}
holds for some (and hence for all) $\lambda\in\dC_+$. 
Then the closed symmetric operator $S:=A\cap B$ 
has finite deficiency indices $(n,n)$ in $\gotH$ and there
exists a boundary triplet $\gP = \{\dC^n,\Gamma_0,\Gamma_1\}$ for $S^*$ 
and a dissipative $n\times n$-matrix $D$
such that $A=\ker(\Gamma_0)$ and $B=\ker(\Gamma_1-D\Gamma_0)$ holds. 
\end{prop}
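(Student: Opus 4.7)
The plan is to proceed in three steps: (i) verify that $S=A\cap B$ is a closed symmetric operator with deficiency indices $(n,n)$; (ii) invoke the classical construction to obtain a boundary triplet for $S^*$ realizing the selfadjoint extension $A$ as $\ker\Gamma_0$; and (iii) identify the abstract parameter corresponding to $B$ as a dissipative $n\times n$ matrix. That $S$ is a closed symmetric operator is immediate: it is the intersection of closed relations, $\mul S\subset\mul A=\{0\}$ so it is operator-valued, and it inherits symmetry from $A$. To obtain the deficiency indices I would establish the identity
\begin{equation*}
\ran(S-\lambda)\;=\;\ker\bigl((A-\lambda)^{-1}-(B-\lambda)^{-1}\bigr),\qquad\lambda\in\dC_+.
\end{equation*}
The inclusion $\subseteq$ is immediate; conversely, if $(A-\lambda)^{-1}h=(B-\lambda)^{-1}h=:f$, then $f\in\dom A\cap\dom B$ and $(A-\lambda)f=h=(B-\lambda)f$ forces $Af=Bf$, so $\{f,Af\}\in S$ and $h\in\ran(S-\lambda)$. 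Since the resolvent difference has rank $n$, $\ran(S-\lambda)$ has codimension $n$ in $\gotH$; hence $n_-(S)=\dim\ker(S^*-\bar\lambda)=n$, and as $S$ has the selfadjoint extension $A$ the defect numbers must coincide, giving $n_+(S)=n_-(S)=n$.

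For step (ii), I would invoke the standard fact (\cite{DM91,DM95,Mal92}) that for any selfadjoint extension $A\in{\rm Ext}(S)$ of a symmetric operator with deficiency indices $(n,n)$ one can construct a boundary triplet $\Pi=\{\dC^n,\Gamma_0,\Gamma_1\}$ for $S^*$ satisfying $A=\ker\Gamma_0$. In the nondensely defined setting this is achieved via the graph version of the second von Neumann formula $S^*=A\dotplus\widehat\cN_i\dotplus\widehat\cN_{-i}$, where $\widehat\cN_\lambda=\{\{f,\lambda f\}:f\in\ker(S^*-\lambda)\}$: one decomposes each $\widehat f\in S^*$ into its three components and defines $\Gamma_0\widehat f$ and $\Gamma_1\widehat f$ as suitable linear combinations of the $\cN_{\pm i}$-components, checking the abstract Green's identity and the surjectivity of $(\Gamma_0,\Gamma_1)^\top$ by direct computation.

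With $\Pi$ in hand, I would set $\Theta:=\{\{\Gamma_0\widehat f,\Gamma_1\widehat f\}:\widehat f\in B\}$, so that $B=S_\Theta$ via \eqref{bij}; since $B$ is maximal dissipative, so is $\Theta$ as a linear relation in $\dC^n$. To see that $\Theta$ is the graph of an everywhere defined matrix $D$, I would check two things. First, $\mul\Theta=\{0\}$: if $\{0,c\}\in\Theta$ then some $\widehat f\in B$ satisfies $\Gamma_0\widehat f=0$, hence $\widehat f\in A$ and so $\widehat f\in A\cap B=S=\ker\Gamma$, which forces $c=\Gamma_1\widehat f=0$. Therefore $\Theta$ is the graph of a linear operator $D$ defined on $\dom\Theta\subset\dC^n$. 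Second, $\dom\Theta=\dC^n$: maximal dissipativity of $\Theta$ yields $\dC_+\subset\rho(\Theta)$, so for $\lambda\in\dC_+$ the map $D-\lambda:\dom\Theta\to\dC^n$ is surjective, which in finite dimensions forces $\dom\Theta=\dC^n$. Hence $D$ is a dissipative matrix and $B=\ker(\Gamma_1-D\Gamma_0)$. The main obstacle is really step (ii): realizing the prescribed selfadjoint $A$ as $\ker\Gamma_0$ in the nondensely defined setting, where the graph form of von Neumann's formula is essential. Once the triplet is in place, step (iii) is a short finite-dimensional argument leveraging the defining identity $S=A\cap B$ to rule out the multi-valued part of $\Theta$.
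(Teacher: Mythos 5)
Your argument is correct and follows essentially the same route as the paper: deficiency indices $(n,n)$ read off from the rank of the resolvent difference, a boundary triplet with $A=\ker(\Gamma_0)$, and the identification of $\Theta=\Gamma B$ as a dissipative matrix, with the multivalued part killed exactly as in the paper (if $\Gamma_0\widehat f=0$ then $\widehat f\in A\cap B=S=\ker\Gamma$, so $\Gamma_1\widehat f=0$); your explicit finite-dimensional argument that maximal dissipativity forces $\dom\Theta=\dC^n$ is in fact spelled out more fully than in the paper, and your identity $\ran(S-\lambda)=\ker\bigl((A-\lambda)^{-1}-(B-\lambda)^{-1}\bigr)$ is a clean substitute for the paper's computation with $(S-\lambda_0)^{-1}=(A-\lambda_0)^{-1}\cap(B-\lambda_0)^{-1}$. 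The one point needing repair is the decomposition you invoke in step (ii): $S^*=A\dotplus\widehat\cN_i\dotplus\widehat\cN_{-i}$ cannot be a direct sum, since $\dim(S^*/S)=2n$ while $\dim(A/S)=n$, so the complement of $A$ in $S^*$ is only $n$-dimensional. The correct versions are von Neumann's formula $S^*=S\dotplus\widehat\cN_i\dotplus\widehat\cN_{-i}$ (with $S$, not $A$) or, as the paper actually does, the single-defect-space decomposition $S^*=A\dotplus\widehat\cN_{\lambda_0}$ for some $\lambda_0\in\rho(A)\cap\rho(B)$, writing $\widehat f=\{f_A+f_{\lambda_0},Af_A+\lambda_0 f_{\lambda_0}\}$ and setting $\Gamma_0\widehat f:=f_{\lambda_0}$, $\Gamma_1\widehat f:=P_0\bigl((A-\bar\lambda_0)f_A+\lambda_0 f_{\lambda_0}\bigr)$; this makes $A=\ker(\Gamma_0)$ automatic and reduces Green's identity and surjectivity to short computations. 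Since the existence of a triplet with prescribed $A=\ker(\Gamma_0)$ is the standard fact you cite from the literature, this slip does not sink the proof, but as written the sketched construction would not go through.
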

\begin{proof}
Let $\lambda_0\in\rho(A)\cap\rho(B)$ and let $n\in\dN$, 
$\{e_1,\dots,e_n\}$ and $\{f_1,\dots, f_n\}$
be linearly independent vectors such that
\begin{equation}\label{pertu}
(B-\lambda_0)^{-1}-(A-\lambda_0)^{-1}=\sum_{i=1}^n (\cdot,e_i)f_i.
\end{equation}
The operator $S=A\cap B$, that is,
\begin{equation*}
Sf=Af=Bf,\quad\dom S=\bigl\{f\in\dom A\cap\dom B:Af=Bf\bigr\},
\end{equation*}
is a (in general non-densely defined) symmetric operator in 
$\gotH$ and it is easy to check that
\begin{equation}\label{ressab}
(S-\lambda_0)^{-1}=(A-\lambda_0)^{-1}\cap(B-\lambda_0)^{-1}
\end{equation}
holds. The intersection in \eqref{ressab} is understood in the sense of linear relations. 
Hence \eqref{pertu} and \eqref{ressab} imply
$\dim(\ran(A-\gl_0)^{-1})/\ran(S-\gl_0)^{-1}))=n$. Therefore $\dim(A/S)=n$ and
$S$ has deficiency indices $(n,n)$. Note that $(S-\lambda_0)^{-1}$ is defined
on the subspace $\gotH\ominus\spa\{e_1,\dots,e_n\}$ which has codimension $n$ in $\gotH$.

It is not difficult to verify that $S^*$ coincides 
with the direct sum of the subspaces $A$ and 
\begin{equation*}
\widehat\cN_{\lambda_0}=
\bigl\{\{f_{\lambda_0}, \lambda_0 f_{\lambda_0}\}:\,f_{\lambda_0}\in\cN_{\lambda_0}
=\ker(S^*-\lambda_0)\bigr\}.
\end{equation*}
Let us decompose the elements $\widehat f\in S^*$ accordingly, i.e.,
\begin{equation}\label{decfhat}
\widehat f=\{ f, f^\prime\}=
\bigl\{f_A + f_{\lambda_0} , A f_A+
\lambda_0 f_{\lambda_0} \bigr\}     ,\quad f_A\in\dom A,\, f_{\lambda_0}\in\cN_{\lambda_0},
\end{equation}
and denote by $P_0$ the orthogonal projection onto the closed subspace $\cN_{\lambda_0}$.
Then $\Pi=\{\cN_{\lambda_0},\Gamma_0,\Gamma_1\}$, where
\begin{equation*}
\Gamma_0\widehat f:=f_{\lambda_0}\quad\text{and}\quad
\Gamma_1\widehat f:=P_0\bigl((A-\bar\lambda_0)f_A+\lambda_0 f_{\lambda_0}\bigr),
\end{equation*}
$\widehat f\in S^*$, is a boundary triplet 
with $A=A_0:=\ker(\Gamma_0)$. In fact, for $\widehat f$ as in \eqref{decfhat}
and $\widehat g=\{g,g^\prime\}=\{g_A + g_{\lambda_0} , A g_A+
\lambda_0 g_{\lambda_0}\}$ we obtain from $(Af_A,g_A)=(f_A,Ag_A)$ that
\begin{equation*}
\begin{split}
(f^\prime,g)-(f,g^\prime)&=
\bigl((A-\bar\lambda_0)f_A+\lambda_0 f_{\lambda_0},g_{\lambda_0}\bigr)
-\bigl(f_{\lambda_0},(A-\bar\lambda_0)g_A+\lambda_0 g_{\lambda_0}\bigr)\\
&=(\Gamma_1\widehat f,\Gamma_0\widehat g)-(\Gamma_0\widehat f,\Gamma_1\widehat g)
\end{split}
\end{equation*}
holds. The surjectivity of the mapping 
$\Gamma=(\Gamma_0,\Gamma_1)^\top:S^*\rightarrow\cN_{\lambda_0}
\oplus\cN_{\lambda_0}$ follows from $\bar\lambda_0\in\rho(A)$ 
since for $x,x^\prime\in\cN_{\lambda_0}$
we can choose $f_A\in\dom A$ such that $(A-\bar\lambda_0)f_A=x^\prime-\lambda_0 x$ holds.
Then obviously $\widehat f:=\{f_A+x , Af_A+\lambda_0 x\}$
satisfies $\Gamma\widehat f=(x,x^\prime)^\top$. 
Moreover, from the definition of $\Gamma_0$ we immediately
obtain that the extension $A_0=\ker(\Gamma_0)$ coincides 
with the operator $A$. As the deficiency indices of $S$
are $(n,n)$ we can identify $\cN_{\lambda_0}$ with $\dC^n$.

Since $B$ is a maximal dissipative extension of the symmetric operator $S$, $B\in {\rm Ext}(S)$.
Hence $B\subset\dom(\Gamma)$ and the linear relation
\bed
D:=\Gamma B=\bigl\{\{\Gamma_0\widehat f, \Gamma_1\widehat f\}:
\widehat f=\{f, Bf\}\in B \bigr\}
\eed
is maximal dissipative in $\dC^n$ and $B$ coincides with 
the maximal dissipative extension $S_D$ via~\eqref{bij}.
We claim that $D$ is a matrix, i.e., $\mul (D)=\{0\}$.
In fact, assume that $D$ is multi-valued, that is, there exists $\widehat f=\{f,Bf\}\in B$ such that
$\{0,\Gamma_1\widehat f\}\in D$ with $\Gamma_1\widehat f\not=0$.
In particular, $\Gamma_0\widehat f=0$, i.e., $\widehat f\in
A_0=A$ and therefore $\widehat f\in A\cap
B=S=\ker(\Gamma_0,\Gamma_1)^\top$,  however, this is a contradiction.
Thus $D$ is a dissipative $n\times n$-matrix and it follows
from \eqref{bij} that $B=\ker(\Gamma_1-D\Gamma_0)$ holds.
\end{proof}

\subsection{Weyl functions and Krein's formula}\label{btrips2}

Again let $S$ be a (in general nondensely defined) closed symmetric operator in 
$\gotH$ with equal deficiency indices as in the previous section.
If $\lambda\in\dC$ is a point of regular type of $S$, i.e.,
$(S-\lambda)^{-1}$ is a bounded operator, we denote the {\it defect subspace}
of $S$ at $\lambda$ by $\kN_\gl=\ker(S^*-\gl)$ and we agree to write
\begin{equation*}
\widehat\cN_\lambda=\bigl\{\{f, \lambda f\}:f\in\cN_\lambda\bigr\}\subset S^*.
\end{equation*}
Let $\Pi=\{\cH,\Gamma_0,\Gamma_1\}$ be a boundary triplet 
for $S^*$ and let $A_0=\ker(\Gamma_0)$ be the fixed selfadjoint 
extension of $S$.
Recall that for every $\lambda\in\rho(A_0)$ the 
relation $S^*$ is the direct sum of the selfadjoint relation $A_0$ and $\widehat\cN_\lambda$ and 
denote by $\pi_1$ the orthogonal projection onto the first component of
$\gotH\oplus\gotH$. The operator valued functions
\bed
\gamma(\cdot):\rho(A_0)\rightarrow  [\kH,\gotH],\quad \lambda\mapsto
\gamma(\lambda)=\pi_1\bigl(\Gamma_0\!\upharpoonright\!\widehat\cN_\lambda\bigr)^{-1}
\eed
and
\bed
M(\cdot):\rho(A_0)\rightarrow  [\kH],\quad \lambda\mapsto
M(\lambda)=\Gamma_1\bigl(\Gamma_0\!\upharpoonright\!\widehat\cN_\lambda\bigr)^{-1}
\eed
are called the {\it $\gamma$-field} and the {\it Weyl function}, respectively,
corresponding to the boundary triplet 
$\Pi=\{\cH,\Gamma_0,\Gamma_1\}$; see, e.g., \cite{DM87a,DM91,DM95,Mal92}.
It can be shown that both  $\gamma(\cdot)$ and $M(\cdot)$ are
holomorphic on $\rho(A_0)$ and that the identities
\begin{equation}\label{gammamu}
\gamma(\mu)=\bigl(I+(\mu-\gl)(A_0-\mu)^{-1}\bigr)\gamma(\gl),
\qquad \gl,\mu\in\rho(A_0),
\end{equation}
and
\begin{equation}\label{mlambda}
M(\gl)-M(\mu)^*=(\gl-\bar\mu)\gamma(\mu)^*\gamma(\gl),
\qquad \gl,\mu\in\rho(A_0),
\end{equation}
are valid; see \cite{DM91,Mal92}.
The identity \eqref{mlambda} yields that $M(\cdot)$ is a  $[\kH]$-valued {\it
Nevanlinna function}, that is, $M(\cdot)$ is holomorphic on $\dC\backslash\dR$, 
$\imag(M(\gl))$ is a nonnegative operator for all $\gl\in\dC_+$ and $M(\gl)=M(\bar\gl)^*$ 
holds for all $\lambda\in\dC\backslash\dR$. 
Moreover, it follows from \eqref{mlambda}
that $0\in \rho(\imag(M(\gl)))$ for
all $\gl\in\dC\backslash\bR$ and, in particular,
\begin{equation}\label{imm}
\frac{\imag (M(\gl))}{\imag (\lambda)}= \gamma(\gl)^*\gamma(\gl),\qquad\lambda\in\dC\backslash\dR.
\end{equation}  
The following inverse result is essentially a consequence
of \cite{LangT77}, see also \cite{DM95,Mal92}.
\begin{thm}\label{nevweyl}
Let $M:\dC\backslash\dR\rightarrow [\cH]$ be a Nevanlinna function such that
$0\in\rho(\imag(M(\gl)))$ for some (and hence for all) 
$\lambda\in\dC\backslash\dR$ and assume that the condition
\begin{equation}\label{2.9}
\lim_{\eta\rightarrow+\infty}\frac{1}{\eta}(M(i\eta)h,h)=0
\end{equation}
holds for all $h\in\cH$. Then there exists a separable Hilbert 
space $\gotH$, a closed simple symmetric operator $S$ in $\gotH$  and a boundary triplet 
$\gP = \{\cH,\Gamma_0,\Gamma_1\}$ for the adjoint relation $S^*$ such that $A_0=\ker(\Gamma_0)$ is a selfadjoint operator and the
Weyl function of $\gP$ coincides with 
$M(\cdot)$ on $\dC\backslash\dR$. The symmetric operator $S$ is densely defined
if and only if the conditions \eqref{2.9} and 
\bed
\lim_{\eta\rightarrow+\infty}\eta\,\imag(M(i\eta)h,h)=\infty,\qquad h\in\cH,\,h\not=0,
\eed
are satisfied.
\end{thm}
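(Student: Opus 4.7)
The plan is to construct an explicit functional model that realizes $M(\cdot)$ as the Weyl function of a boundary triplet, starting from the operator-valued Herglotz--Nevanlinna integral representation. Every $[\cH]$-valued Nevanlinna function admits
$$M(\lambda)=\alpha+\beta\lambda+\int_\dR\Bigl(\frac{1}{t-\lambda}-\frac{t}{1+t^2}\Bigr)d\Sigma(t),$$
with $\alpha=\alpha^*$, $\beta=\beta^*\geq 0$ in $[\cH]$, and $\Sigma$ a nonnegative $[\cH]$-valued operator measure satisfying $\int_\dR(1+t^2)^{-1}d\Sigma(t)\in[\cH]$. Testing at $\lambda=i\eta$ gives $\eta^{-1}(M(i\eta)h,h)\to(\beta h,h)$, so the hypothesis \eqref{2.9} is precisely $\beta=0$.

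With $\beta=0$, I take $\gotH_0:=L^2(\dR,d\Sigma;\cH)$ (the direct integral associated with $\Sigma$), let $A_0$ be the selfadjoint operator of multiplication by the independent variable on $\gotH_0$, and define
$$(\gamma(\lambda)h)(t):=\frac{1}{t-\lambda}\,h,\qquad \lambda\in\dC\backslash\dR,\ h\in\cH.$$
The integrability of $\Sigma$ yields $\gamma(\lambda)h\in\gotH_0$, and the partial fraction identity $\frac{1}{(t-\lambda)(t-\bar\mu)}=\frac{1}{\lambda-\bar\mu}\bigl(\frac{1}{t-\lambda}-\frac{1}{t-\bar\mu}\bigr)$ combined with the representation of $M$ produces
$$\gamma(\mu)^*\gamma(\lambda)=\frac{M(\lambda)-M(\mu)^*}{\lambda-\bar\mu},$$
so \eqref{mlambda} already holds. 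Replacing $\gotH_0$ by its reducing subspace $\gotH:=\clospa\{\gamma(\lambda)h:\lambda\in\dC\backslash\dR,\,h\in\cH\}$ and setting $A:=A_0\!\upharpoonright\!\gotH$ gives a selfadjoint operator on a separable Hilbert space $\gotH$.

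I then prescribe the symmetric operator $S$ by requiring its defect subspaces to be $\cN_\lambda:=\ran\gamma(\lambda)$; equivalently, $S^*=A\dotplus\widehat\cN_{\lambda_0}$ in the sense of the direct decomposition \eqref{decfhat} for a fixed $\lambda_0\in\dC_+$. Following the recipe of Proposition~\ref{pertupropd}, I set $\Gamma_0\widehat f:=f_{\lambda_0}$ (the $\cN_{\lambda_0}$-component of $\widehat f$) and define $\Gamma_1$ via $\alpha$ and $\Sigma$ so that Green's identity and surjectivity of $\Gamma=(\Gamma_0,\Gamma_1)^\top$ hold, $A=\ker(\Gamma_0)$, and the Weyl function of $\Pi=\{\cH,\Gamma_0,\Gamma_1\}$ coincides with $M(\cdot)$; this verification rests on the identity above together with \eqref{gammamu}. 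By the choice of $\gotH$ as the closed linear span of all defect elements, $S$ is simple in the sense of \eqref{simpl}.

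For the density statement, one identifies $\mul(S^*)$ with the weak limit points of $\eta\gamma(i\eta)h$ as $\eta\to+\infty$; since \eqref{imm} yields $\|\eta\gamma(i\eta)h\|^2=\eta(\imag M(i\eta)h,h)$, the condition $\eta(\imag M(i\eta)h,h)\to\infty$ for every nonzero $h$ is equivalent to $\mul(S^*)=\{0\}$, i.e., to $\dom(S)$ being dense in $\gotH$. The principal technical obstacle is the rigorous handling of the direct integral $L^2(\dR,d\Sigma;\cH)$ for a general operator-valued measure $\Sigma$, together with the verification of the boundary triplet axioms and the Weyl-function identity in the possibly nondensely defined setting; the identification of $\mul(S^*)$ through the boundary behavior of $M(\cdot)$ along the imaginary axis is the other delicate point.
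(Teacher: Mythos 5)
Your construction is correct and is essentially the argument behind the paper's citation: the paper gives no proof of Theorem~\ref{nevweyl}, referring instead to Langer--Textorius and Derkach--Malamud, and the functional model you build (the integral representation with $\beta=0$ forced by \eqref{2.9}, the multiplication operator in $L^2_\Sigma$, the $\gamma$-field $(t-\lambda)^{-1}h$, and the boundary triplet with $\Gamma_1$ built from $\alpha$ and $\int d\Sigma(t)f(t)$) is exactly the model the paper itself writes out explicitly in Section~\ref{invsec} before Corollary~\ref{invcor}. Your treatment of the density criterion via $\|\eta\gamma(i\eta)h\|^2=\eta\,\imag(M(i\eta)h,h)$ and $\mul(S^*)=(\dom S)^\perp$ is likewise the standard one.
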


The spectrum and the resolvent set of closed extensions
in ${\rm Ext}(S)$ can be described with the help of the
Weyl function. More precisely, if $S_\Theta\in{\rm Ext}(S)$ is the extension
corresponding to $\Theta\in\widetilde\kC(\kH)$ via \eqref{bij}, then
a point $\gl\in\rho(A_0)$ belongs to
$\rho(S_\Theta)$ ($\sigma_i(S_\Theta)$, $i=p,c,r$) if and only if $0\in\rho(\Theta-M(\gl))$ 
($0\in\sigma_i(\Theta-M(\gl))$, $i=p,c,r$, respectively). Moreover, for
$\gl\in\rho(A_0)\cap\rho(S_\Theta)$ the well-known resolvent formula
\begin{equation}\label{2.8}
(S_\Theta - \gl)^{-1} = (A_0 - \gl)^{-1} + \gga(\gl)\bigl(\Theta -
M(\gl)\bigr)^{-1}\gga(\bar{\gl})^*
\end{equation}
holds, see \cite{DM91,Mal92}. Formula \eqref{2.8} and Proposition~\ref{pertupropd} 
imply the following statement which will be used in Section~\ref{sec4}.
\begin{cor}\label{pertupropcor}
Let $A$ be a selfadjoint operator and let $B$ be a 
maximal dissipative operator in $\gotH$ such that 
\begin{equation*}
(B-\lambda)^{-1}-(A-\lambda)^{-1}\in\cF_n(\gotH)
\end{equation*}
holds for some (and hence for all) $\lambda\in\dC_+$. 
Let $\gP = \{\dC^n,\Gamma_0,\Gamma_1\}$ be the boundary triplet from Proposition~\ref{pertupropd}
such that $A=\ker(\Gamma_0)$ and $B=\ker(\Gamma_1-D\Gamma_0)$ 
holds with some dissipative $n\times n$-matrix $D$
and denote the $\gamma$-field and the Weyl function of $\gP$
by $\gamma(\cdot)$ and $M(\cdot)$, respectively. Then
\begin{equation}\label{resab}
(B-\lambda)^{-1}-(A-\lambda)^{-1}=\gamma(\lambda)\bigl(D-M(\lambda)\bigr)^{-1}\gamma(\bar\lambda)^*
\end{equation}
holds for all $\lambda\in\rho(B)\cap\rho(A)$.
\end{cor}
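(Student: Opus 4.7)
The plan is to read off the corollary directly from Krein's resolvent formula \eqref{2.8} once one identifies the correct parameter $\Theta$.

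By Proposition~\ref{pertupropd}, with the boundary triplet $\gP = \{\dC^n,\Gamma_0,\Gamma_1\}$ in hand, the maximal dissipative extension $B$ is realized as $B = \ker(\Gamma_1-D\Gamma_0)$ for a dissipative $n\times n$-matrix $D$. Under the bijection \eqref{bij}, writing $D$ for the graph of this matrix regarded as an element of $\widetilde\kC(\dC^n)$, this means precisely that $B = S_D$, since $\ker(\Gamma_1-D\Gamma_0)$ is exactly the description of $S_\Theta$ with $\Theta=D$ given after \eqref{bij}.

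Now fix any $\lambda\in\rho(A)\cap\rho(B) = \rho(A_0)\cap\rho(S_D)$. By the spectral description of extensions recalled before \eqref{2.8}, the condition $\lambda\in\rho(S_D)$ is equivalent to $0\in\rho(D-M(\lambda))$, so that $(D-M(\lambda))^{-1}\in[\dC^n]$ is well defined. Applying Krein's formula \eqref{2.8} with $\Theta=D$ and $S_\Theta=B$ yields
\begin{equation*}
(B-\lambda)^{-1} = (A-\lambda)^{-1} + \gamma(\lambda)\bigl(D-M(\lambda)\bigr)^{-1}\gamma(\bar\lambda)^*,
\end{equation*}
which on rearrangement is precisely the claimed identity \eqref{resab}.

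There is essentially no obstacle: the content is entirely contained in Proposition~\ref{pertupropd} (which provides the boundary triplet with $A=\ker(\Gamma_0)$ and $B=\ker(\Gamma_1-D\Gamma_0)$) combined with the general Krein formula \eqref{2.8}. The only thing worth checking carefully is the translation between $B=\ker(\Gamma_1-D\Gamma_0)$ and $B=S_D$ in the parametrization \eqref{bij}, and that the hypothesis $\lambda\in\rho(A)\cap\rho(B)$ guarantees the invertibility of $D-M(\lambda)$ needed to make sense of the right-hand side.
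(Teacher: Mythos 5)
Your proposal is correct and is exactly the argument the paper intends: the corollary is stated as an immediate consequence of Krein's formula \eqref{2.8} applied with $\Theta=D$, together with the identifications $A=A_0=\ker(\Gamma_0)$ and $B=S_D=\ker(\Gamma_1-D\Gamma_0)$ furnished by Proposition~\ref{pertupropd}. Your additional remark that $\lambda\in\rho(A)\cap\rho(B)$ guarantees $0\in\rho(D-M(\lambda))$ is the right point to check and matches the spectral correspondence recalled just before \eqref{2.8}.
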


If the maximal dissipative operator $B$ in 
Proposition~\ref{pertupropd} and Corollary~\ref{pertupropcor} is even selfadjoint
the representation of the resolvent difference in \eqref{resab} can be further simplified.
\begin{cor}\label{pertuprop}
Let $A$ and $B$ be selfadjoint operators in $\gotH$ such that 
\begin{equation*}
(B-\lambda)^{-1}-(A-\lambda)^{-1}\in\cF_n(\gotH)
\end{equation*}
holds for some (and hence for all) $\lambda\in\dC\backslash\dR$. 
Then the closed symmetric operator $S=A\cap B$ 
has finite deficiency indices $(n,n)$ in $\gotH$ and there
exists
a boundary triplet $\gP = \{\dC^n,\Gamma_0,\Gamma_1\}$ for $S^*$ 
such that $A=\ker(\Gamma_0)$ and $B=\ker(\Gamma_1)$ holds. Moreover, if $\gamma(\cdot)$ and $M(\cdot)$
denote the $\gamma$-field and Weyl function of $\gP$, then
\bed
(B-\lambda)^{-1}-(A-\lambda)^{-1}=-\gamma(\lambda)M(\lambda)^{-1}\gamma(\bar\lambda)^*
\eed
holds for all $\lambda\in\rho(B)\cap\rho(A)$.
\end{cor}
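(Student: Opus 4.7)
The plan is to reduce to Corollary~\ref{pertupropcor} by a linear transformation of the boundary triplet. Since every selfadjoint operator is in particular maximal dissipative, Proposition~\ref{pertupropd} applies directly and yields the deficiency index assertion as well as a boundary triplet $\widetilde\Pi=\{\dC^n,\widetilde\Gamma_0,\widetilde\Gamma_1\}$ for $S^*$ with $A=\ker(\widetilde\Gamma_0)$ and $B=\ker(\widetilde\Gamma_1-D\widetilde\Gamma_0)$ for some dissipative $n\times n$-matrix $D$. The key observation is that since $B$ is actually selfadjoint, the bijective correspondence \eqref{bij} (together with the fact from Section~\ref{btrips2} that $S_\Theta$ is selfadjoint iff $\Theta$ is selfadjoint) forces $D$ to be a Hermitian matrix.

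Next I would perform the boundary triplet transformation
\[
\Gamma_0:=\widetilde\Gamma_0,\qquad \Gamma_1:=\widetilde\Gamma_1-D\widetilde\Gamma_0,
\]
and verify that $\Pi=\{\dC^n,\Gamma_0,\Gamma_1\}$ is again a boundary triplet for $S^*$. Green's identity for $\Pi$ reduces, after expanding, to Green's identity for $\widetilde\Pi$ plus the term $(\widetilde\Gamma_0\widehat f,(D-D^*)\widetilde\Gamma_0\widehat g)$, which vanishes precisely because $D=D^*$. Surjectivity of $(\Gamma_0,\Gamma_1)^\top$ follows from surjectivity of $(\widetilde\Gamma_0,\widetilde\Gamma_1)^\top$ since the two are related by the invertible triangular transformation $\left(\begin{smallmatrix} 1 & 0 \\ -D & 1\end{smallmatrix}\right)$. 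Clearly $\ker(\Gamma_0)=\ker(\widetilde\Gamma_0)=A$, and $\ker(\Gamma_1)=\ker(\widetilde\Gamma_1-D\widetilde\Gamma_0)=B$ as required.

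Finally I would compute the $\gamma$-field and Weyl function of the new triplet. By definition, $\gamma(\lambda)=\pi_1(\Gamma_0\!\upharpoonright\!\widehat\cN_\lambda)^{-1}=\widetilde\gamma(\lambda)$ since $\Gamma_0=\widetilde\Gamma_0$, while
\[
M(\lambda)=\Gamma_1(\Gamma_0\!\upharpoonright\!\widehat\cN_\lambda)^{-1}
=\widetilde M(\lambda)-D.
\]
Now Corollary~\ref{pertupropcor} applied to the triplet $\widetilde\Pi$ gives
\[
(B-\lambda)^{-1}-(A-\lambda)^{-1}=\widetilde\gamma(\lambda)\bigl(D-\widetilde M(\lambda)\bigr)^{-1}\widetilde\gamma(\bar\lambda)^*
=-\gamma(\lambda)M(\lambda)^{-1}\gamma(\bar\lambda)^*,
\]
where invertibility of $M(\lambda)$ on $\rho(A)\cap\rho(B)\supset\dC\setminus\dR$ is guaranteed by \eqref{mlambda} (so that $0\in\rho(\imag M(\lambda))$ off the real axis) combined with the general spectral description $\lambda\in\rho(S_\Theta)\Leftrightarrow 0\in\rho(\Theta-M(\lambda))$ applied with $\Theta=0$, yielding the claimed formula.

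The only nontrivial point is the selfadjointness of $D$; everything else is a routine transformation of triplets. No step I foresee presents a genuine obstacle beyond carefully tracking the sign in $M(\lambda)=\widetilde M(\lambda)-D$ and the resulting sign in the resolvent formula.
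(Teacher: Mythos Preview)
Your proposal is correct and follows essentially the same route as the paper's proof: start from Proposition~\ref{pertupropd}, observe that selfadjointness of $B$ forces $D=D^*$, replace $\widetilde\Gamma_1$ by $\widetilde\Gamma_1-D\widetilde\Gamma_0$, and read off the resolvent formula from Corollary~\ref{pertupropcor} with the shifted Weyl function $M(\lambda)=\widetilde M(\lambda)-D$. You have simply made explicit the verifications (Green's identity, surjectivity, invertibility of $M(\lambda)$) that the paper leaves as ``a simple calculation''.
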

\begin{proof}
According to Proposition~\ref{pertupropd} there 
is a boundary triplet $\gP^\prime = \{\dC^n,\Gamma_0^\prime,\Gamma_1^\prime\}$
for $S^*$ such that $A=\ker(\Gamma_0^\prime)$ and 
$B=\ker(\Gamma_1^\prime-D\Gamma_0^\prime)$. Here the dissipative
matrix $D$ is even symmetric since $B$ is selfadjoint. A simple calculation 
shows that $\gP = \{\dC^n,\Gamma_0,\Gamma_1\}$, where
\begin{equation*}
\Gamma_0:=\Gamma_0^\prime\quad\text{and}\quad\Gamma_1:=\Gamma_1-D\Gamma_0, 
\end{equation*}
is also a boundary triplet for $S^*$. If $M(\cdot)$ is the Weyl 
function corresponding to the boundary triplet $\Pi^\prime$, 
then $\lambda\mapsto M(\lambda)-D$ is the Weyl function 
corresponding to the boundary triplet $\Pi$. This together 
with Proposition~\ref{pertupropd} and Corollary~\ref{pertupropcor} implies the statement.
\end{proof}

\section{A representation of the scattering matrix}\label{threesec}

In this section we consider scattering systems $\{A,B\}$ consisting of
two selfadjoint operators $A$ and $B$ in a separable Hilbert space $\gotH$
and we assume that the difference of the resolvents of $A$ and $B$ is a
finite rank operator, that is, for some $n\in\dN$ we have
\begin{equation}\label{finite}
(B-\lambda)^{-1}-(A-\lambda)^{-1}\in\cF_n(\gotH)
\end{equation}
for one (and hence for all) $\lambda\in\rho(A)\cap\rho(B)$.
Then the {\it wave operators}
\bed
W_\pm(B,A) := \slim_{t\to\pm\infty}e^{itB}e^{-itA}P^{ac}(A),
\eed
exist and are complete, where $P^{ac}(A)$ denotes the orthogonal
projection onto the absolutely continuous subspace $\gotH^{ac}(A)$
of $A$. Completeness means that the ranges of
$W_\pm(B,A)$ coincide with the absolutely continuous
subspace $\gotH^{ac}(B)$ of $B$; cf. \cite{BW83,Ka76,Wei03,Yaf92}.
The {\it scattering operator} $S_{AB}$ of the {\it scattering system}
$\{A,B\}$ is defined by
\bed
S_{AB}:= W_+(B,A)^*W_-(B,A).
\eed
Since the scattering operator commutes with $A$ it follows
that $S_{AB}$ is unitarily equivalent to a multiplication operator
induced by a family $\{S_{AB}(\lambda)\}_{\lambda\in\dR}$ of unitary operators in
a spectral representation of
$A^{ac}:=A\upharpoonright \dom(A)\cap\gotH^{ac}(A)$.
The aim of this section is to generalize a representation result
of this so-called
{\it scattering matrix} $\{S_{AB}(\lambda)\}_{\gl \in \bR}$ from \cite{BMN06a}.

According to \eqref{finite} and Corollary~\ref{pertuprop} the (possibly nondensely defined)
closed symmetric operator $S=A\cap B$ has deficiency indices $(n,n)$ and there exists a boundary triplet
$\gP = \{\dC^n,\gG_0,\gG_1\}$ for $S^*$ such that $A =\ker(\gG_0)$ and $B =\ker(\gG_1)$.
The Weyl function $M(\cdot)$ corresponding to the boundary triplet $\Pi$
is a $[\dC^n]$-valued Nevanlinna function. Therefore the limit
\begin{equation}\label{mlim}
M(\gl) :=M(\gl+i0)=\lim_{\varepsilon\rightarrow +0} M(\lambda+i\varepsilon)
\end{equation}
from the upper half-plane $\dC_+$ exists for a.e. $\lambda\in\dR$; see
\cite{Don74,Gar81}. 
As $\imag(M(\lambda))$ is uniformly positive (uniformly negative) for all $\lambda\in\dC_+$ ($\lambda\in\dC_-$, respectively) 
the inverses 
$M(\lambda)^{-1}$ exist for all $\lambda\in\dC\backslash\dR$ and $-M(\cdot)^{-1}$ is also
a $[\dC^n]$-valued Nevanlinna function. Hence it follows that
the limit $\lim_{\varepsilon\rightarrow 0+}M(\gl+i\varepsilon)^{-1}$ exists for a.e. $\gl \in \bR$
and coincides with the inverse of $M(\gl)$ in \eqref{mlim} for a.e. $\gl \in \bR$. 

In the following theorem we find a representation of the scattering matrix $\{S_{AB}(\lambda)\}_{\lambda\in\dR}$
of the scattering system $\{A,B\}$ in the direct integral
$L^2(\bR,d\gl,\kH_\gl)$, where 
\begin{equation}\label{hlambda}
\kH_\gl := \ran(\imag(M(\gl+i0))\quad\text{for a.e.}\quad \gl \in \bR,
\end{equation}
cf. Appendix~A. We will formulate and prove our result first for the case of 
a simple symmetric operator $S=A\cap B$ and discuss the general case afterwards in Theorem~\ref{III.2}.
For the special case that the simple symmetric operator $S=A\cap B$ is densely defined
Theorem~\ref{scattering} reduces to \cite[Theorem 3.8]{BMN06a}. We remark that the proof of
Theorem~\ref{scattering} differs from the proof of \cite[Theorem 3.8]{BMN06a}. Here we make use
of the abstract representation result Theorem~\ref{A.II}. 
\begin{thm}\label{scattering}
Let $A$ and $B$ be selfadjoint operators in $\gotH$ such that \eqref{finite}
is satisfied, suppose that the symmetric operator $S=A\cap B$ is simple and let 
$\gP = \{\dC^n,\Gamma_0,\Gamma_1\}$ be a boundary triplet
for $S^*$ such that $A=\ker(\Gamma_0)$ and $B=\ker(\Gamma_1)$; cf. Corollary~\ref{pertuprop}.
Let $M(\cdot)$ be the corresponding Weyl function and define the spaces $\kH_\gl$ for a.e. $\lambda\in\dR$ 
as in \eqref{hlambda}.

Then 
$L^2(\bR,d\gl,\kH_\gl)$ performs a spectral representation of
$A^{ac}$ such that the scattering matrix $\{S_{AB}(\gl)\}_{\gl
\in \bR}$ of the scattering system $\{A,B\}$ admits the
representation
\begin{equation}\label{scatformula}
S_{AB}(\gl) = I_{\kH_\gl} -
2i\sqrt{\imag(M(\gl))}\,M(\gl)^{-1}\sqrt{\imag(M(\gl))}\in [\cH_\lambda]
\end{equation}
for a.e. $\gl \in \bR$, where $M(\gl)= M(\gl + i0)$.
\end{thm}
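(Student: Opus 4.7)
The plan is to reduce the claim to the abstract representation result Theorem~\ref{A.II} from the appendix, which takes as input a factorization of the resolvent difference $(B-\lambda)^{-1}-(A-\lambda)^{-1}$ and delivers both a spectral model for $A^{ac}$ and a stationary formula for the scattering matrix. All the analytic ingredients needed to feed into that theorem have already been assembled earlier in the paper.

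The first step is to apply Corollary~\ref{pertuprop}, which gives the Krein-type resolvent identity
\[
(B-\lambda)^{-1}-(A-\lambda)^{-1}=-\gamma(\lambda)\,M(\lambda)^{-1}\,\gamma(\bar\lambda)^*,\qquad \lambda\in\rho(A)\cap\rho(B).
\]
This yields a factorization with outer factor $\gamma(\lambda)$ and inner factor $-M(\lambda)^{-1}$, which together with the fundamental identity \eqref{imm}, namely $\gamma(\lambda)^*\gamma(\lambda)=\imag(M(\lambda))/\imag(\lambda)$, supplies the data required by Theorem~\ref{A.II}: a holomorphic outer $\gamma$-field whose modulus squared generates the spectral density of $A_0=A$, and a meromorphic matrix-valued kernel $-M(\lambda)^{-1}$ whose boundary values encode the perturbation.

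The second step is to exhibit $L^2(\dR,d\lambda,\kH_\lambda)$ as a spectral representation of $A^{ac}$. Simplicity of $S=A\cap B$ gives $\gotH=\clospa\{\ker(S^*-\lambda):\lambda\in\dC\setminus\dR\}$, and via \eqref{gammamu} this translates into totality of the ranges $\gamma(\lambda)\dC^n$, whose absolutely continuous components span $\gotH^{ac}(A)$. The nonnegative matrix $\imag(M(\lambda+i0))$, which exists a.e.\ because $M(\cdot)$ is a $[\dC^n]$-valued Nevanlinna function, serves as the spectral density. One then defines the spectral transform $\Phi:\gotH^{ac}(A)\to L^2(\dR,d\lambda,\kH_\lambda)$ by a boundary-value map built from $\gamma(\bar\lambda+i0)^*$ normalized by $\sqrt{\imag(M(\lambda+i0))/\pi}$; the identity \eqref{imm} together with Stieltjes inversion applied to the scalar measures $d(E^A(\lambda)\gamma(\mu)h,\gamma(\mu)h)$ yields isometry, and simplicity of $S$ forces surjectivity onto $L^2(\dR,d\lambda,\kH_\lambda)$ with $\kH_\lambda=\ran(\imag(M(\lambda+i0)))$. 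Substituting the resulting representation and the factorization from step~1 into the stationary formula provided by Theorem~\ref{A.II} identifies the on-shell scattering operator on $\kH_\lambda$ as
\[
S_{AB}(\lambda)=I_{\kH_\lambda}-2i\sqrt{\imag(M(\lambda))}\,M(\lambda)^{-1}\,\sqrt{\imag(M(\lambda))},
\]
after the factors of $\pi$ arising from the density $\imag(M(\lambda))/\pi$ cancel the $2\pi i$ in the stationary formula and the minus sign in the resolvent formula flips the sign.

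The hard part will be step~2: constructing the spectral transform explicitly and verifying that it is an isometric isomorphism onto the direct integral. This requires handling a.e.\ boundary limits of the matrix Nevanlinna function $M(\cdot)$, the passage from the operator-valued Herglotz representation to direct-integral spectral theory along the lines of the appendix, and the observation that simplicity of $S$ precludes any a.c.\ subspace of $A$ outside the closed linear span of the $\gamma$-field. Once step~2 is in place the scattering formula is a direct substitution; the only genuine bookkeeping is tracking the normalizing constants $\pi$ and the sign produced by the minus in front of $M(\lambda)^{-1}$, so that $-2\pi i\cdot(\pi^{-1/2})^2\cdot(-1)$ indeed simplifies to the coefficient $-2i$ displayed in \eqref{scatformula}.
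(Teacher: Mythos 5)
Your overall strategy coincides with the paper's: verify the hypotheses of Theorem~\ref{A.II} using the Krein formula from Corollary~\ref{pertuprop} and the simplicity of $S$, and then read off the scattering matrix. Two of your three steps are fine in outline, although you propose to re-derive the spectral transform $\Phi$ by hand; this is unnecessary, since Lemma~\ref{A.I} and Theorem~\ref{A.II} already deliver the spectral representation once condition \eqref{A.5} is checked. What that step really requires is the identity $\imag\bigl(M(\lambda+i0)\bigr)=\pi(1+\lambda^2)K(\lambda)$ relating the Weyl function to the density $K(\lambda)=\frac{d}{d\lambda}\gamma(-i)^*E_A((-\infty,\lambda))\gamma(-i)$, obtained by combining \eqref{gammamu} with \eqref{imm}; you allude to a normalization by $\sqrt{\imag(M)/\pi}$ but drop the factor $(1+\lambda^2)$, which is needed to cancel the $(1+\lambda^2)^2$ in \eqref{A.3}.

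The genuine gap is your final step. The formula \eqref{A.3} does not produce $M(\lambda)^{-1}$ by ``direct substitution'': the scattering matrix there involves the quantity $Z(\lambda)$ of \eqref{A.2}, which is a sum of three terms, one of which is the boundary value $\lim_{\varepsilon\to 0+}Q^*(B-\lambda-i\varepsilon)^{-1}Q$ with $Q=-\gamma(i)M(i)^{-1}$. Evaluating this limit is the main computational content of the proof: one must insert Krein's resolvent formula \eqref{2.8} for $(B-\lambda)^{-1}$, pass to the boundary, and combine the result with $Q^*Q=-\imag(M(i)^{-1})$ (itself a consequence of \eqref{imm}) to arrive at $Z(\lambda)=-(1+\lambda^2)^{-1}M(\lambda)^{-1}$. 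The paper does this via \cite[Lemma 3.2]{BMN06a} in \eqref{asdfg}--\eqref{qqq}. Your proposal never mentions how the resolvent of $B$ at real spectral parameters is to be controlled, and without that step the function $M(\lambda)^{-1}$ simply does not appear in the formula; your bookkeeping $-2\pi i\cdot(\pi^{-1/2})^2\cdot(-1)=-2i$ presupposes the answer rather than deriving it. You have also misplaced the difficulty: the spectral representation is the routine part here, and the computation of $Z(\lambda)$ is the part that needs an argument.
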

\begin{proof}
In order to verify the representation \eqref{scatformula} of the scattering matrix 
$\{S_{AB}(\lambda)\}_{\lambda\in\dR}$
we will make use of Theorem~\ref{A.II}. For this let us first rewrite the difference of the resolvents
$(B -i)^{-1}$ and $(A - i)^{-1}$ as in \eqref{A.5}. 
According to Corollary~\ref{pertuprop} we have
\begin{equation}\label{diffi}
(B -i)^{-1} - (A - i)^{-1} = - \gga(i)M(i)^{-1}\gga(-i)^*.
\end{equation}
Using \eqref{gammamu} we find
\bed
(B -i)^{-1} - (A - i)^{-1} = -(A+i)(A-i)^{-1}\gga(-i)M(i)^{-1}\gga(-i)^*.
\eed
and hence the representation \eqref{A.5} follows if we set
\begin{equation}\label{settings}
\phi(t):=\frac{t+i}{t-i},\quad t\in\dR,\qquad C=\gamma(-i)\quad\text{and}\quad G=-M(i)^{-1}.
\end{equation}
Moreover, since $S$ is simple it follows from \eqref{simpl} that 
\bed
\gotH = \clospa\bigl\{\ker(S^*-\lambda):\lambda\in\dC\backslash\dR\bigr\}
\eed
holds. As $\ran C=\ran\gamma(-i)=\ker(S^*+i)$ one concludes in the same way as in the proof of 
\cite[Lemma 3.4]{BMN06a} that the condition 
\begin{equation*}
\gotH^{ac}(A)=\clospa\bigl\{E^{ac}_A(\delta)\ran(C):\delta\in\cB(\dR)\bigr\}
\end{equation*}
is satisfied.

Next we express the $[\dC^n]$-valued function
\bed
\lambda\mapsto K(\lambda)=\frac{d}{d\lambda}\,C^*E_A((-\infty,\lambda))C
\eed
and its square root $\lambda\mapsto\sqrt{K(\lambda)}$ in terms of the
Weyl function $M(\cdot)$ for a.e. $\lambda\in\dR$. We have
\begin{equation}\label{k12}
\begin{split}
K(\gl) & =  \lim_{\varepsilon\to+0}\frac{1}{2\pi i}
\gga(-i)^*\left((A- \gl - i\varepsilon)^{-1} - (A- \gl + i\varepsilon)^{-1}\right)\gga(-i)\\
& =  \lim_{\varepsilon\to+0}\frac{\varepsilon}{\pi}\gga(-i)^*
(A- \gl - i\varepsilon)^{-1} (A- \gl + i\varepsilon)^{-1}\gga(-i)
\end{split}
\end{equation}
and on the other hand by \eqref{imm} 
\begin{equation*}
\imag(M(\gl + i\varepsilon)) = \varepsilon\gga(\gl+i\varepsilon)^*\gga(\gl+i\varepsilon). 
\end{equation*}
Inserting $\gga(\gl+i\varepsilon)=(I+(\lambda+i\varepsilon+i)(A-\lambda-i\varepsilon)^{-1}\gga(-i))$ (cf. \eqref{gammamu})
we obtain
\begin{equation}\label{k13}
\imag(M(\gl + i\varepsilon))
=\varepsilon\gga(-i)^*(I+A^2)(A- \gl - i\varepsilon)^{-1} (A- \gl + i\varepsilon)^{-1}\gga(-i)
\end{equation}
and by comparing \eqref{k12} and \eqref{k13} we find
\begin{equation}\label{mk}
\imag(M(\gl)) =\lim_{\varepsilon\rightarrow 0+} \imag(M(\gl + i\varepsilon))=\pi(1 + \gl^2)K(\gl)
\end{equation}
for a.e. $\lambda\in\dR$. In particular, the finite-dimensional subspaces $\ran (K(\lambda))$ in 
Theorem~\ref{A.II} coincide with the spaces $\cH_\lambda=\ran (\imag(M(\lambda)))$ for a.e. $\lambda\in\dR$
and therefore
$L^2(\bR,d\gl,\kH_\gl)$ is a spectral representation of $A^{ac}$
and the scattering matrix $\{S_{AB}(\gl)\}_{\gl \in \bR}$ admits the
representation \eqref{A.3}. Inserting the square root $\sqrt{K(\lambda)}$ from \eqref{mk} into \eqref{A.3}
we find
\begin{equation}\label{scatab1}
S_{AB}(\lambda)=I_{\cH_\lambda}+2  i(1+\lambda^2)\sqrt{\imag(M(\lambda))}Z(\lambda)\sqrt{\imag(M(\lambda))}
\end{equation}
and it remains to compute 
\begin{equation}\label{zzz}
Z(\lambda)=\frac{1}{\lambda+i}Q^*Q+\frac{\phi(\lambda)}{(\lambda+i)^2}G+
\lim_{\varepsilon\rightarrow 0+}Q^*(B-\lambda-i\varepsilon)^{-1}Q,
\end{equation}
where $Q=\phi(A)CG=-\gamma(i)M(i)^{-1}$, cf. \eqref{A.2}, \eqref{settings} and \eqref{diffi}.

It follows from \cite[Lemma 3.2]{BMN06a} that
\begin{equation}\label{asdfg}
Q^*(B-\lambda-i0)^{-1}Q = \frac{1}{1+\lambda^2}\left(M(i)^{-1} - M(\lambda)^{-1}\right) +
\frac{1}{\lambda + i}\imag(M(i)^{-1})
\end{equation}
holds for a.e. $\lambda\in\dR$ and from \eqref{imm} we obtain
\begin{equation}\label{qqq}
\begin{split}
Q^*Q&=(M(i)^{-1})^*\gamma(i)^*\gamma(i)M(i)^{-1}=(M(i)^{-1})^*\imag (M(i))M(i)^{-1}\\
&=-\imag (M(i)^{-1}).
\end{split}
\end{equation}
Therefore we conclude from \eqref{qqq} and \eqref{settings} that \eqref{zzz}
takes the form
\begin{equation*}
Z(\lambda)=-\frac{1}{\lambda+i}\imag (M(i)^{-1})-\frac{1}{1+\lambda^2}M(i)^{-1}+Q^*(B-\lambda-i0)^{-1}Q 
\end{equation*}
and by inserting \eqref{asdfg} we find $Z(\lambda)=-(1+\lambda^2)^{-1}M(\lambda)^{-1}$. Hence \eqref{scatab1}
turns into the representation \eqref{scatformula} of the scattering
matrix $\{S_{AB}(\lambda)\}$.
\end{proof}

In general it may happen that the operator $S = A \cap B$ is not simple, that is, there
is a nontrivial decomposition of the Hilbert space $\gotH = \wh \gotH \oplus \gotK$ 
such that 
\begin{equation}\label{simsim}
S =  \wh S \oplus H,
\end{equation} 
where $\wh S$ is simple symmetric operator in $\wh\gotH$ and $H$ is a selfadjoint operator in $\gotK$, cf.
Section~\ref{linrel}. Then there exist selfadjoint 
extensions $\widehat A$ and $\widehat B$ of $\widehat S$ in
$\widehat\gotH$ such that 
\begin{equation}\label{abdeco}
A=\widehat A\oplus H\qquad\text{and}\qquad B=\widehat B\oplus H.
\end{equation}
The next result extends the representation of the scattering matrix in Theorem~\ref{scattering} to the case of
a non-simple $S$. 
\begin{thm}\label{III.2}
Let $A$ and $B$ be selfadjoint operators in $\gotH$ such that \eqref{finite}
is satisfied, let $S=A\cap B$ be decomposed as in \eqref{simsim} and let 
$\gP = \{\dC^n,\Gamma_0,\Gamma_1\}$ be a boundary triplet
for $S^*$ such that $A=\ker(\Gamma_0)$ and $B=\ker(\Gamma_1)$; cf. Corollary~\ref{pertuprop}.
Furthermore, let $L^2(\bR,d\gl,\kK_\gl)$ be a spectral representation of the absolutely
continuous part $H^{ac}$ of the selfadjoint operator $H$ in the Hilbert space $\gotK$.

Then $L^2(\bR,d\gl,\kH_\gl
\oplus \kK_\gl)$ is a spectral representation of $A^{ac}$ and the scattering 
matrix $\{S_{AB}(\gl)\}_{\lambda\in\dR}$ is given by
\bed
S_{AB}(\gl) = 
\begin{pmatrix}
S_{\wh A,\wh B}(\gl) & 0\\
0 & I_{\kK_\gl}
\end{pmatrix} 
\in\bigl[
\kH_\gl
\oplus
\kK_\gl\bigr]
\eed
for a.e. $\gl \in \bR$, where $\kH_\gl = \ran(\imag(M(\gl+i0)))$,
$M(\cdot)$ is the Weyl function corresponding to the boundary triplet
$\Pi$ and 
\begin{equation*}
S_{\widehat A\widehat B}(\gl) = I_{\kH_\gl} -
2i\sqrt{\imag(M(\gl))}M(\gl)^{-1}\sqrt{\imag(M(\gl))}\in [\cH_\lambda]
\end{equation*}
is the scattering matrix of the scattering system $\{\widehat
A,\widehat B\}$ from \eqref{abdeco}.
\end{thm}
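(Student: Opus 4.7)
The plan is to reduce the non-simple case to the simple one by exploiting the orthogonal decomposition and to verify that the Weyl function is unaffected by the selfadjoint summand.

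First, I would observe that from $S=\widehat S\oplus H$ one gets $S^*=\widehat S^{\,*}\oplus H$ in the sense of linear relations, and the defect subspaces satisfy $\ker(S^*-\lambda)=\ker(\widehat S^{\,*}-\lambda)\subset\widehat\gotH$ for every $\lambda\in\dC\setminus\dR$. Consequently, if $\Pi=\{\dC^n,\Gamma_0,\Gamma_1\}$ is the boundary triplet supplied by Corollary~\ref{pertuprop} for $S^*$, then its restriction $\widehat\Pi=\{\dC^n,\widehat\Gamma_0,\widehat\Gamma_1\}$ to $\widehat S^{\,*}$, defined by $\widehat\Gamma_j\,\widehat f:=\Gamma_j\,(\widehat f\oplus\{0,0\})$, is a boundary triplet for $\widehat S^{\,*}$ because $\Gamma$ vanishes on the graph of $H$ (Green's identity is automatic there). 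Since the $\gamma$-field factors through $\widehat\cN_\lambda$, the Weyl function of $\widehat\Pi$ coincides with $M(\cdot)$. From \eqref{abdeco} one reads off that $\widehat A=\ker(\widehat\Gamma_0)$ and $\widehat B=\ker(\widehat\Gamma_1)$, so the hypotheses of Theorem~\ref{scattering} are met for $\{\widehat A,\widehat B\}$ in $\widehat\gotH$ with exactly the same Weyl function.

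Next, I would decompose the wave operators. Because of \eqref{abdeco} the semigroups factor, $e^{-itA}=e^{-it\widehat A}\oplus e^{-itH}$ and similarly for $B$, and $P^{ac}(A)=P^{ac}(\widehat A)\oplus P^{ac}(H)$. Hence
\begin{equation*}
W_\pm(B,A)=W_\pm(\widehat B,\widehat A)\oplus W_\pm(H,H)
=W_\pm(\widehat B,\widehat A)\oplus I_{\gotK^{ac}(H)},
\end{equation*}
and therefore
\begin{equation*}
S_{AB}=S_{\widehat A\widehat B}\oplus I_{\gotK^{ac}(H)}
\end{equation*}
as operators on $\widehat\gotH^{ac}(\widehat A)\oplus\gotK^{ac}(H)=\gotH^{ac}(A)$.

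I would then construct the spectral representation of $A^{ac}$. Applying Theorem~\ref{scattering} to the simple operator $\widehat S=\widehat A\cap\widehat B$ produces a spectral representation of $\widehat A^{ac}$ on $L^2(\bR,d\lambda,\kH_\lambda)$ in which $S_{\widehat A\widehat B}(\lambda)$ has the stated Nevanlinna form, and by assumption $L^2(\bR,d\lambda,\kK_\lambda)$ represents $H^{ac}$. The direct-sum decomposition of $A^{ac}$ then gives $L^2(\bR,d\lambda,\kH_\lambda\oplus\kK_\lambda)$ as a spectral representation of $A^{ac}$, in which the scattering operator becomes multiplication by the block-diagonal family in the statement.

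The main obstacle is the bookkeeping around the boundary triplet: one has to make precise the decomposition $S^*=\widehat S^{\,*}\oplus H$ (including its multi-valued parts if $S$ is nondensely defined), check that the restriction $\widehat\Pi$ is genuinely a boundary triplet with surjective boundary map, and confirm that its Weyl function and $\gamma$-field are literally those of $\Pi$ — without this identification, one could not invoke the formula of Theorem~\ref{scattering} with the same $M(\cdot)$ that appears in the statement of Theorem~\ref{III.2}. Once this is in place, the rest is the straightforward orthogonal splitting of the scattering problem.
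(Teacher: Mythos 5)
Your proposal is correct and follows essentially the same route as the paper: decompose the wave and scattering operators along $\gotH=\widehat\gotH\oplus\gotK$, restrict the boundary triplet $\Pi$ to $(\widehat S)^*$ (noting $\Gamma$ vanishes on the graph of $H\subset S=\ker\Gamma$, so the Weyl function is unchanged), and apply Theorem~\ref{scattering} to the simple pair $\{\widehat A,\widehat B\}$. No substantive differences from the paper's argument.
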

\begin{proof}
It follows from the decomposition \eqref{abdeco} that the absolutely continuous subspaces
$\gotH^{ac}(A)$ and $\gotH^{ac}(B)$ can be written as the orthogonal
sums 
\bed
\gotH^{ac}(A) = \wh\gotH^{ac}(\wh A) \oplus \gotK^{ac}(H) \quad \text{and}
\quad
\gotH^{ac}(B) = \wh\gotH^{ac}(\wh B) \oplus \gotK^{ac}(H)
\eed
of the absolutely continuous subspaces of $\widehat A$ and $\widehat B$, and the absolutely continuous subspace 
$\gotK^{ac}(H)$ of the selfadjoint operator $H$ in $\gotK$.
Therefore the wave operators of $W_\pm(B,A)$ of the scattering system $\{A,B\}$ 
can be written with the wave operators $W_\pm(\wh B,\wh A)$ of the scattering system $\{\wh A,\wh B\}$
in the form 
\bed
W_\pm(B,A) = W_\pm(\wh B,\wh A) \oplus I_{\gotK^{ac}(H)}.
\eed
This implies the corresponding decomposition of the scattering operator $S_{AB}$ in
$S =  S_{\wh A \wh B} \oplus I_{\gotK^{ac}(H)}$ and hence the scattering matrix 
$\{S_{AB}(\lambda)\}_{\lambda\in\dR}$ of the scattering system $\{A,B\}$ coincides with the orthogonal sum of the 
scattering matrix $\{S_{\widehat A\widehat B}(\lambda)\}_{\lambda\in\dR}$ of the scattering system $\{\widehat A,\widehat B\}$
and the identity operator in the spectral representation $L^2(\bR,d\gl,\kK_\gl)$ of $H^{ac}$.

It is not difficult to see that $\widehat\Pi=\{\dC^n,\widehat\Gamma_0,\widehat\Gamma_1\}$, where
$\widehat\Gamma_0$ and $\widehat \Gamma_1$ denote the restrictions of the boundary mappings $\Gamma_0$ and
$\Gamma_1$ from $S^*=(\widehat S)^*\oplus H$ onto $(\widehat S)^*$, is a boundary triplet for $(\widehat S)^*$
such that $\widehat A=\ker(\widehat\Gamma_0)$ and $\widehat B=\ker(\widehat\Gamma_1)$. Moreover, the Weyl function
corresponding to $\widehat \Pi$ coincides with the Weyl function $M(\cdot)$ corresponding to $\Pi$. Hence, by 
Theorem~\ref{scattering} the
scattering matrix $\{S_{\widehat A\widehat
  B}(\lambda)\}_{\lambda\in\dR}$ is given by \eqref{scatformula}.
\end{proof}

\section{Dissipative and Lax-Phillips scattering systems}\label{sec4}

In this section we consider a scattering systems $\{A,B\}$ 
consisting of a selfadjoint operator $A$ and a 
maximal dissipative operator $B$ in the Hilbert space $\gotH$. 
As above it is assumed that 
\begin{equation}\label{4.1}
(B-\lambda)^{-1}-(A-\lambda)^{-1}\in\cF_n(\gotH),\qquad\lambda\in\rho(A)\cap\rho(B),
\end{equation}
holds for some $n\in\dN$.
Then the closed symmetric operator $S = A \cap B$ is in general not densely defined and its 
deficiency indices are $(n,n)$. By
Corollary~\ref{pertupropcor} there exists a boundary triplet 
$\Pi=\{\dC^n,\Gamma_0,\Gamma_1\}$ for $S^*$ and a
dissipative $n\times n$-matrix $D$ such that $A=\ker(\Gamma_0)$, 
$B=\ker(\Gamma_1-D\Gamma_0)$ and 
\begin{equation*}
(B-\lambda)^{-1}-(A-\lambda)^{-1}=\gamma(\lambda)\bigl(D-M(\lambda)\bigr)^{-1}\gamma(\bar\lambda)^*
\end{equation*}
holds. For our later purposes in Section~\ref{invsec} it is sufficient to investigate
the special case $\ker(\imag (D)) =\{0\}$, 
the general case can be treated in the same way as in
\cite{BMN06b,BMN08a}.

For the investigation of the dissipative scattering system 
$\{A,B\}$ it is useful to construct a so-called minimal
selfadjoint dilation $L$ of the maximal dissipative operator $B$.
For the explicit construction of $L$ we will use the
following lemma which also shows how the constant 
function $\dC_+\ni\lambda\mapsto -i\imag (D)$, $\gl \in \dC_+$, 
can be realized as a Weyl function. 
A detailed proof of Lemma~\ref{glem} can be found in \cite{BMN06b}.
\begin{lem}\label{glem}
Let $T$ be the first order differential operator in the Hilbert space 
$L^2(\dR,\dC^n)$ defined by
\begin{equation*}
(Tg)(x)=-ig^\prime(x),\qquad \dom(T)=\bigl\{g\in W^1_2(\dR,\dC^n):g(0)=0\bigr\}.
\end{equation*}
Then the following holds.
\begin{itemize}
\item [{\rm (i)}]
$T$ is a densely defined closed simple 
symmetric operator with deficiency indices $(n,n)$. 
\item [{\rm (ii)}] The adjoint
operator is 
\begin{equation*}
(T^*g)(x)=-ig^\prime(x),\qquad \dom (T^*)=W^1_2(\dR_-,\dC^n)\oplus W^1_2(\dR_+,\dC^n).
\end{equation*} 
\item [{\rm (iii)}]
The triplet $\Pi_T=\{\dC^n,\Upsilon_0,\Upsilon_1\}$, where 
\begin{equation*}
\begin{split}
  \Upsilon_0\hat g&:=\frac{1}{\sqrt{2}}(-\imag (D))^{-\frac{1}{2}}\,\bigl(g(0+)-g(0-)\bigr),\\
\Upsilon_1\hat g&:=
\frac{i}{\sqrt{2}}(-\imag (D))^{\frac{1}{2}}\,\bigl(g(0+)+g(0-)\bigr),\quad \hat g=\{g,T^*g\},
\end{split}
\end{equation*}
is a boundary triplet for $T^*$ and $T_0=\ker(\Upsilon_0)$ 
is the selfadjoint first order differential
operator in $L^2(\dR,\dC^n)$ defined on $W^1_2(\dR,\dC^n)$. 
\item [{\rm (iv)}]
The Weyl function $\tau(\cdot)$ corresponding to the boundary 
triplet $\Pi_T=\{\dC^n,\Upsilon_0,\Upsilon_1\}$ is given by
\begin{equation*}
\tau(\lambda)=\begin{cases} -i\imag (D), & \lambda\in\dC_+,\\ i\imag (D), & \lambda\in\dC_-.\end{cases}
\end{equation*}
\end{itemize}
\end{lem}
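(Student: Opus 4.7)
The plan is to verify the four statements by direct computation based on the explicit form of $T$ and the boundary maps. For (i), I would first note that $C_c^\infty(\dR\setminus\{0\},\dC^n)\subset\dom(T)$ is dense in $L^2(\dR,\dC^n)$, so $T$ is densely defined; closedness is immediate from the graph-norm equivalence on $W^1_2$, and symmetry follows from one integration by parts in which the only boundary term, at $x=0$, vanishes because $g(0)=0$. The deficiency indices I would read off by solving the ODE $-ig'=\pm ig$, obtaining $g(x)=e^{\mp x}v$, $v\in\dC^n$; the $L^2$-condition retains one half-line of support, giving $n_\pm(T)=n$. To establish simplicity I would show that $\clospa\{\cN_\lambda:\lambda\in\dC\setminus\dR\}$ equals the whole space, which reduces to the totality of $\{e^{i\lambda x}\chi_{\dR_\pm}(x)v:\lambda\in\dC_\pm,\,v\in\dC^n\}$ in $L^2(\dR_\pm,\dC^n)$; this is standard (e.g.\ from injectivity of the Laplace transform). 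The identification of $T^*$ in (ii) I would obtain by splitting $(Tf,g)=\int_{-\infty}^0+\int_0^\infty$ and integrating by parts on each half-line; the boundary terms at $0\pm$ vanish because of $f(0)=0$, and the dimension count $\dim(\dom(T^*)/\dom(T))=2n$ matches the deficiency indices, closing the reverse inclusion.

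For (iii), I would reduce the abstract Green's identity to the elementary formula
\[
(T^*f,g)-(f,T^*g)=-i\int_{\dR}(f\bar g)'\,dx=i\bigl(f(0+)\overline{g(0+)}-f(0-)\overline{g(0-)}\bigr),
\]
obtained by splitting at $0$. Writing $J:=-\imag(D)$, which is strictly positive under the standing hypothesis $\ker(\imag(D))=\{0\}$ so that $J^{\pm 1/2}$ are well-defined selfadjoint matrices satisfying $J^{1/2}J^{-1/2}=I$, I would then expand
\[
(\Upsilon_1\widehat f,\Upsilon_0\widehat g)-(\Upsilon_0\widehat f,\Upsilon_1\widehat g)=\tfrac{i}{2}\bigl\langle f(0+)+f(0-),g(0+)-g(0-)\bigr\rangle+\tfrac{i}{2}\bigl\langle f(0+)-f(0-),g(0+)+g(0-)\bigr\rangle,
\]
which collapses by polarization to precisely $i(f(0+)\overline{g(0+)}-f(0-)\overline{g(0-)})$. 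Surjectivity of $(\Upsilon_0,\Upsilon_1)^\top$ is then an algebraic inversion: given $(\xi_0,\xi_1)\in\dC^n\oplus\dC^n$, set $g(0+)=\tfrac{1}{\sqrt 2}(J^{1/2}\xi_0-iJ^{-1/2}\xi_1)$ and $g(0-)=\tfrac{1}{\sqrt 2}(-J^{1/2}\xi_0-iJ^{-1/2}\xi_1)$ and extend arbitrarily in $W^1_2(\dR_-,\dC^n)\oplus W^1_2(\dR_+,\dC^n)$. Finally $\ker(\Upsilon_0)$ consists of those $g\in\dom(T^*)$ with $g(0+)=g(0-)$, i.e.\ $W^1_2(\dR,\dC^n)$, on which $T^*$ is the standard selfadjoint momentum operator $T_0$.

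For (iv), I would compute the defect subspaces explicitly. Solving $-ig'=\lambda g$ in $\dom(T^*)$ yields $g(x)=e^{i\lambda x}v_+\chi_{\dR_+}(x)+e^{i\lambda x}v_-\chi_{\dR_-}(x)$; the $L^2$-constraint forces $v_-=0$ for $\lambda\in\dC_+$ and $v_+=0$ for $\lambda\in\dC_-$. For $\lambda\in\dC_+$ one has $g(0+)=v$, $g(0-)=0$, so $\Upsilon_0\widehat g=\tfrac{1}{\sqrt 2}J^{-1/2}v$ and $\Upsilon_1\widehat g=\tfrac{i}{\sqrt 2}J^{1/2}v$; eliminating $v$ delivers $\tau(\lambda)=iJ=-i\imag(D)$. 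The case $\lambda\in\dC_-$ is analogous with the support on $\dR_-$ producing an extra sign, yielding $\tau(\lambda)=i\imag(D)$. The computations throughout are essentially elementary; the only genuinely delicate point is the use of $\ker(\imag(D))=\{0\}$ to make $J^{\pm 1/2}$ well-defined and to guarantee surjectivity of $(\Upsilon_0,\Upsilon_1)^\top$. The main task is really bookkeeping of the $\pm$-signs at $0$ so that the Weyl function picks up the correct orientation in each half-plane.
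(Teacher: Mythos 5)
Your proof is correct. Note that the paper itself gives no argument for this lemma --- it defers entirely to the reference \cite{BMN06b} --- so there is nothing to compare against except the standard direct verification, which is exactly what you carry out: the integration by parts across $x=0$, the von Neumann dimension count $\dim(\dom(T^*)/\dom(T))=2n$ to close the reverse inclusion in (ii), the polarization identity confirming Green's formula, and the explicit defect elements $e^{i\lambda x}\chi_{\dR_\pm}v$ yielding $\tau(\lambda)=\mp i\,\imag(D)$. All the sign bookkeeping checks out, and you correctly isolate the role of $\ker(\imag(D))=\{0\}$ in making $(-\imag(D))^{\pm 1/2}$ well defined and $(\Upsilon_0,\Upsilon_1)^\top$ surjective.
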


Let $S=A\cap B$ and let $T$ be the first order differential operator 
from Lemma~\ref{glem}. It is clear that
\begin{equation}\label{ortsum}
\begin{pmatrix} S & 0 \\ 0 & T\end{pmatrix}%,\qquad \dom(R)=\dom (S)\oplus\dom (T),
\end{equation}
is a closed symmetric operator in the Hilbert space 
$\gotH\oplus L^2(\dR,\dC^n)$ with deficiency indices $(2n,2n)$ and the
adjoint of \eqref{ortsum} is the orthogonal sum of the relation $S^*$ and 
the operator $T^*$ from Lemma~\ref{glem}.
The next theorem, which is a variant of 
\cite[Theorem 3.2]{BMN06b}, shows how a minimal selfadjoint dilation of the dissipative
operator $B=\ker(\Gamma_1-D\Gamma_0)$ can be constructed. For the particular case of 
Sturm-Liouville operators with dissipative boundary conditions 
this construction goes back to B.S.~Pavlov; cf. \cite{Pav76,Pav96}.
\begin{thm}\label{dilthm}
Let $A$ be a selfadjoint operator and let $B$ be a maximal dissipative operator in $\gotH$ 
such that \eqref{4.1} holds. Let $\Pi=\{\dC^n,\Gamma_0,\Gamma_1\}$ be a boundary triplet
for $S^*$, $S = A\cap B$, and let $D$ be a dissipative
$n\times n$-matrix with $\ker(\imag(D))=\{0\}$ such that  
$A=\ker(\Gamma_0)$ and $B=\ker(\Gamma_1-D\Gamma_0)$; cf. 
Proposition~\ref{pertupropd}. If $\Pi_T=\{\dC^n,\Upsilon_0,\Upsilon_1\}$ is
the boundary triplet of $T^*$ introduced in Lemma \ref{glem}, then the operator
\begin{equation*}
 L\begin{pmatrix} f \\ g\end{pmatrix}=
\begin{pmatrix} f^\prime \\ T^*g\end{pmatrix},\qquad \hat f=\{f,f^\prime\}\in S^*,\,\,\,
\hat g=\{g,T^*g\},
\end{equation*}
with domain
\begin{equation*}
\dom(L)=\left\{\begin{pmatrix} f \\ g\end{pmatrix}\in\dom(S^*)\oplus\dom(T^*):
\begin{matrix}
\Gamma_0\hat f-\Upsilon_0\hat g  =  0\\
(\Gamma_1-\real(D)\Gamma_0)\hat f = -\Upsilon_1\hat g\end{matrix}\right\}
\end{equation*}
is a minimal selfadjoint dilation of the maximal 
dissipative operator $B$, that is, for all $\lambda\in\dC_+$
\begin{equation*}
P_\gotH\bigl(L-\lambda\bigr)^{-1}\upharpoonright_\gotH=(B-\lambda)^{-1}
\end{equation*}
holds and the condition 
$\gotH\oplus L^2(\dR,\dC^n)=\clospa\{(L-\lambda)^{-1}\gotH:\lambda\in\dC\backslash\dR\}$
is satisfied.
\end{thm}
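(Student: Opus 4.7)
The plan is to realize $L$ as a specific selfadjoint extension of the orthogonal sum \eqref{ortsum} through the boundary-triplet machinery of Section~\ref{twosec}, and then derive the dilation and minimality properties from Krein's formula \eqref{2.8} and the simplicity of $T$ stated in Lemma~\ref{glem}(i).

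First I would combine $\Pi$ and $\Pi_T$ into a direct-sum boundary triplet $\widetilde\Pi=\{\dC^n\oplus\dC^n,\widetilde\Gamma_0,\widetilde\Gamma_1\}$ for the adjoint of \eqref{ortsum}, setting
$$
\widetilde\Gamma_j\begin{pmatrix}\hat f\\ \hat g\end{pmatrix}:=\begin{pmatrix}\Gamma_j\hat f\\ \Upsilon_j\hat g\end{pmatrix},\qquad j=0,1.
$$
Green's identity and surjectivity for $\widetilde\Pi$ are inherited componentwise from $\Pi$ and $\Pi_T$. Rewriting the two boundary equations defining $\dom(L)$ as $\{\widetilde\Gamma_0 F,\widetilde\Gamma_1 F\}\in\Theta$ produces the linear relation
$$
\Theta=\bigl\{\bigl\{(x,x)^\top,(y,-y+\real(D)x)^\top\bigr\}:x,y\in\dC^n\bigr\}\subset\dC^{2n}\times\dC^{2n}.
$$
A short calculation using $\real(D)=\real(D)^*$ shows $\Theta$ is symmetric of dimension $2n$, hence selfadjoint. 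By the bijection \eqref{bij}, $L=(S\oplus T)_\Theta$ is therefore selfadjoint in $\gotH\oplus L^2(\dR,\dC^n)$.

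Next I would verify the dilation property via Krein's formula \eqref{2.8}. Let $\gamma_T(\cdot)$ denote the $\gamma$-field of $\Pi_T$; the $\gamma$-field and Weyl function of $\widetilde\Pi$ are the direct sums $\widetilde\gamma(\lambda)=\gamma(\lambda)\oplus\gamma_T(\lambda)$ and $\widetilde M(\lambda)=M(\lambda)\oplus\tau(\lambda)$. For $\lambda\in\dC_+$, Lemma~\ref{glem}(iv) gives $\tau(\lambda)=-i\imag(D)$. Solving $\{(a,a),(x_1,x_2)+\widetilde M(\lambda)(a,a)\}\in\Theta$ for $(a,a)$ reduces, by adding the two coordinate equations, to $a=(D-M(\lambda))^{-1}(x_1+x_2)$ with $D=\real(D)+i\imag(D)$, so that both block rows of $(\Theta-\widetilde M(\lambda))^{-1}$ deliver the single operator $(D-M(\lambda))^{-1}$ applied to $x_1+x_2$. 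Substituting into \eqref{2.8} with source $(f,0)\in\gotH\oplus L^2(\dR,\dC^n)$, and using $\gamma_T(\bar\lambda)^*\!\upharpoonright_{\gotH}=0$ and $(A_0\oplus T_0-\lambda)^{-1}\!\upharpoonright_{\gotH}=(A-\lambda)^{-1}$, yields
$$
P_\gotH(L-\lambda)^{-1}\upharpoonright_\gotH=(A-\lambda)^{-1}+\gamma(\lambda)\bigl(D-M(\lambda)\bigr)^{-1}\gamma(\bar\lambda)^*,
$$
which by Corollary~\ref{pertupropcor} is precisely $(B-\lambda)^{-1}$.

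For minimality, set $\cL:=\clospa\{(L-\lambda)^{-1}\gotH:\lambda\in\dC\backslash\dR\}$; this is a reducing subspace for $L$ containing $\gotH$. The same computation shows that the $L^2(\dR,\dC^n)$-component of $(L-\lambda)^{-1}f$ is $\gamma_T(\lambda)(D-M(\lambda))^{-1}\gamma(\bar\lambda)^*f$. By \eqref{imm} the map $\gamma(\bar\lambda):\dC^n\to\gotH$ is injective, hence $\gamma(\bar\lambda)^*$ is surjective onto $\dC^n$; since $(D-M(\lambda))^{-1}$ is a bijection of $\dC^n$, the $L^2$-part of $(L-\lambda)^{-1}\gotH$ coincides with $\ran\gamma_T(\lambda)=\ker(T^*-\lambda)$. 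Therefore $\cL\supset\clospa\{\ker(T^*-\lambda):\lambda\in\dC\backslash\dR\}$, which by Lemma~\ref{glem}(i) and \eqref{simpl} is all of $L^2(\dR,\dC^n)$, so $\cL=\gotH\oplus L^2(\dR,\dC^n)$.

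The only genuinely computational step I expect to be the main obstacle is the $2\times 2$ block inversion that identifies the effective boundary matrix $(D-M(\lambda))^{-1}$; the bookkeeping of how $\real(D)$ built into $\Theta$ combines with $\tau(\lambda)=-i\imag(D)$ in $\widetilde M(\lambda)$ to regenerate the full dissipative matrix $D$ is the essential point, and everything else reduces to formal applications of \eqref{bij}, \eqref{2.8}, \eqref{imm} and \eqref{simpl}.
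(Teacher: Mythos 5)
Your construction of the direct--sum boundary triplet, the identification of $L$ with $(S\oplus T)_\Theta$ for
$\Theta=\{\{(x,x)^\top,(y,\real(D)x-y)^\top\}:x,y\in\dC^n\}$, the block inversion giving
$(\Theta-\widetilde M(\lambda))^{-1}(x_1,x_2)=\bigl((D-M(\lambda))^{-1}(x_1+x_2),(D-M(\lambda))^{-1}(x_1+x_2)\bigr)$,
and the ensuing verification of the dilation property via Krein's formula and of minimality via the simplicity of $T$ are all correct. In fact you reconstruct in detail the argument that the paper simply outsources to \cite[Theorem 3.2]{BMN06b}; the paper's written proof contains almost none of this.

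There is, however, one genuine gap, and it is precisely the single point that the paper's own proof does spell out: you never show that $L$ is an \emph{operator}. Your argument establishes that $\Theta$ is a selfadjoint relation in $\dC^{2n}$ and hence, via \eqref{bij}, that $L$ is a selfadjoint \emph{relation} in $\gotH\oplus L^2(\dR,\dC^n)$. But $S=A\cap B$ is in general nondensely defined, so $\mul(S^*)=(\dom S)^\perp$ may be nontrivial, and a selfadjoint extension of $S\oplus T$ inside $S^*\oplus T^*$ need not be single-valued; the criterion \eqref{check} must be checked. The theorem explicitly asserts that $L$ is an operator, and this is the only new content compared with the densely defined case treated in \cite{BMN06b}. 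The fix is short: if $\{0,f'\}\oplus\{0,g'\}\in L$, then $g'=0$ because $T^*$ is an operator, hence $\hat g=0$ and $\Upsilon_0\hat g=0$; the first boundary condition then gives $\Gamma_0\hat f=0$, so $\{0,f'\}\in A=\ker(\Gamma_0)$, and since $A$ is an operator, $f'=0$. You should insert this verification before invoking \eqref{bij}; with it, your proof is complete and, if anything, more self-contained than the one in the paper.
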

\begin{proof}
Besides the assertion that $L$ is an operator the proof 
of Theorem~\ref{dilthm} is essentially the same as the proof 
of \cite[Theorem 3.2]{BMN06b}. The fact that the restriction 
$L$ of the relation $S^*\oplus T^*$ is an operator
can be seen as follows: Suppose that $\hat f\oplus \hat g\in L$, where
$\{0,f^\prime\}\in S^*$, $\{0,g^\prime\}\in T^*$. Since $T^*$ is an
operator we have $g^\prime=0$ and this implies
$\hat g=0$. Therefore we obtain from the 
boundary conditions in $\dom(L)$ that
\begin{equation*}
\Gamma_0\hat f=\Upsilon_0\hat g=0
\end{equation*}
holds. Hence $\hat f=\{f,f^\prime\}$ belongs to 
$A=\ker(\Gamma_0)$ which is an operator. Therefore $f^\prime=0$ and $L$
is an operator.
\end{proof}

Let $L$ be the minimal selfadjoint dilation of the 
maximal dissipative operator $B$ from Theorem~\ref{dilthm}
and define a selfadjoint operator $K$ in $\gotH\oplus L^2(\dR,\dC^n)$
by
\begin{equation}\label{l}
K:=\begin{pmatrix} A & 0 \\ 0 & T_0\end{pmatrix},
\end{equation}
where $T_0=\ker(\Upsilon_0)$ is the selfadjoint first order 
differential operator from Lemma~\ref{glem}. In the following theorem we 
consider the scattering system $\{K,L\}$ in the Hilbert 
space $\gotH\oplus L^2(\dR,\dC^n)$. The operator $R := K \cap L$
is symmetric and may have a nontrivial selfadjoint part $H$ which acts
in the Hilbert space 
\begin{equation*}
 \bigl(\gotH\oplus L^2(\dR,\dC^n)\bigr)\ominus\clospa\bigl\{\ker(R^*-\lambda):\lambda\in\dC\backslash\dR\bigr\}.
\end{equation*}
Hence the operators $K$ and $L$ admit the decompositions
\bed
K = \wh K \oplus H 
\quad \mbox{and} \quad 
L = \wh L \oplus H,
\eed
with selfadjoint operators $\wh K$ and $\wh L$ in $\clospa\{\ker(R^*-\lambda):\lambda\in\dC\backslash\dR\}$ 
and we have $R = \wh R \oplus H$, where $\wh R = \wh K \cap \wh L$. In
particular, $\wh K$ and $\wh L$ are both selfadjoint extensions of the closed simple symmetric operator $\wh R$.
We remark that the symmetric operator $R$ is an $n$-dimensional extension of the orthogonal sum in \eqref{ortsum};
this follows easily from the next theorem.
In the following we assume that $L^2(\bR,d\gl,\kK_\gl)$ is a spectral representation of the absolutely continuous 
part $H^{ac}$ of $H$.
\begin{thm}\label{couplescat}
Let $A$ be a selfadjoint operator and let $B$ be a maximal dissipative operator in $\gotH$ 
such that \eqref{4.1} holds. Let $\Pi=\{\dC^n,\Gamma_0,\Gamma_1\}$ be a boundary triplet
for $S^*$, $S = A\cap B$, and let $D$ be a dissipative $n\times n$-matrix with $\ker(\imag(D))=\{0\}$
such that  
$A=\ker(\Gamma_0)$ and $B=\ker(\Gamma_1-D\Gamma_0)$; cf. Proposition~\ref{pertupropd}. 
If $L$ is the minimal self-adjoint dilation of $B$
in Theorem \ref{dilthm} and $K$ is given by \eqref{l}, then
\begin{equation}\label{klresdiff}
(K-\lambda)-(L-\lambda)^{-1}\in\cF_n,\qquad\lambda\in\dC\backslash\dR.
\end{equation}
Moreover, if $L^2(\bR,d\gl,\kK_\gl)$ is a spectral representation of
$H^{ac}$, where $H$ is the maximal self-adjoint part of $R = K\cap
L$, then $L^2(\bR,d\gl,\dC^n\oplus\kK_\gl)$ is a spectral representation of $K$ 
and the scattering matrix $\{S_{KL}(\lambda)\}_{\lambda\in\dR}$ of 
the scattering system $\{K,L\}$ admits the representation
\begin{equation*}
S_{KL}(\lambda)=
\begin{pmatrix} 
S_{\wh K\wh L}(\lambda) & 0 \\ 
0 & I_{\kK_\gl}
\end{pmatrix}\in\bigl[\dC^n \oplus \kK_\gl\bigr]
\end{equation*}
for a.e. $\lambda\in\dR$, where 
\begin{equation*}
S_{\wh K\wh L}(\lambda)=I_{\dC^n}-2i\sqrt{\imag
  (M(\lambda)-D)}\,\bigl(M(\lambda)-D\bigr)^{-1}
\sqrt{\imag (M(\lambda)-D)}
\end{equation*}
is the scattering matrix of the scattering system $\{\widehat K,\widehat L\}$,
$M(\cdot)$ is the Weyl function of the boundary triplet $\Pi$ and 
$M(\lambda)=M(\lambda+i0)$. 
\end{thm}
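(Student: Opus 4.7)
The plan is to realize $\{K,L\}$ as two selfadjoint extensions of a common symmetric operator with finite deficiency indices, construct a boundary triplet adapted to this pair, and then apply Theorem~\ref{III.2}. Set $R:=K\cap L$ and write elements of $S^*\oplus T^*$ as $\widehat F=\widehat f\oplus\widehat g$ with $\widehat f\in S^*$ and $\widehat g\in T^*$. Since $S\oplus T$ has deficiency indices $(2n,2n)$ and $R$ cuts out the $n$-dimensional subspace of boundary data $\Gamma_0\widehat f=0$, $\Upsilon_0\widehat g=0$, $\Gamma_1\widehat f+\Upsilon_1\widehat g=0$ inside $S^*\oplus T^*$ (as one reads off from the boundary conditions defining $K$ and $L$ in Theorem~\ref{dilthm}), $R$ has deficiency indices $(n,n)$; a direct Green's identity computation then identifies
\bed
R^*=\bigl\{\widehat f\oplus\widehat g\in S^*\oplus T^*:\Gamma_0\widehat f=\Upsilon_0\widehat g\bigr\}.
\eed
Hence, once a boundary triplet for $R^*$ with $K=\ker(\widetilde\Gamma_0)$ is exhibited, \eqref{klresdiff} follows from Corollary~\ref{pertuprop}.

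For the boundary triplet I would set, for $\widehat F=\widehat f\oplus\widehat g\in R^*$,
\bed
\widetilde\Gamma_0\widehat F:=\Gamma_0\widehat f=\Upsilon_0\widehat g,\qquad
\widetilde\Gamma_1\widehat F:=\Gamma_1\widehat f+\Upsilon_1\widehat g-\real(D)\widetilde\Gamma_0\widehat F.
\eed
Green's identity for $\widetilde\Pi=\{\dC^n,\widetilde\Gamma_0,\widetilde\Gamma_1\}$ is obtained by adding those of $\Pi$ and $\Pi_T$ on $R^*$; the shift by $\real(D)$ preserves Green's identity because $\real(D)$ is selfadjoint, and surjectivity follows from the surjectivity of $\Gamma$ and $\Upsilon$. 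A routine check then shows that $\ker(\widetilde\Gamma_0)=A\oplus T_0=K$ and that $\ker(\widetilde\Gamma_1)$ reproduces the boundary conditions in $\dom(L)$ from Theorem~\ref{dilthm}. For a defect element $\widehat F_\lambda\in\widehat\cN_\lambda(R)$ one necessarily has $\widehat f\in\widehat\cN_\lambda(S)$ and $\widehat g\in\widehat\cN_\lambda(T)$, so $\Gamma_1\widehat f=M(\lambda)\Gamma_0\widehat f$ and $\Upsilon_1\widehat g=\tau(\lambda)\Upsilon_0\widehat g$; combined with $\Gamma_0\widehat f=\Upsilon_0\widehat g$ this yields the Weyl function
\bed
\widetilde M(\lambda)=M(\lambda)+\tau(\lambda)-\real(D)=M(\lambda)-D, \qquad \lambda\in\dC_+,
\eed
by Lemma~\ref{glem}(iv). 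Because $\ker(\imag(D))=\{0\}$, the imaginary part $\imag(\widetilde M(\lambda+i0))=\imag(M(\lambda+i0))-\imag(D)$ is strictly positive, so $\ran(\imag(\widetilde M(\lambda+i0)))=\dC^n$ for a.e.\ $\lambda\in\dR$.

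With these preparations in place it remains to decompose $R=\widehat R\oplus H$ as in \eqref{simsim}, which induces $K=\widehat K\oplus H$ and $L=\widehat L\oplus H$, and to apply Theorem~\ref{III.2} to $\{K,L\}$ using the boundary triplet $\widetilde\Pi$ and its Weyl function $\widetilde M(\cdot)=M(\cdot)-D$. The block form of $S_{KL}(\lambda)$ and the explicit formula for $S_{\widehat K\widehat L}(\lambda)$ then follow from \eqref{scatformula} with $M(\lambda)$ replaced by $\widetilde M(\lambda)=M(\lambda)-D$. The main technical point is the careful choice of $\widetilde\Gamma_1$ so that $L=\ker(\widetilde\Gamma_1)$ holds exactly (rather than as an extension via \eqref{bij} with a nontrivial parameter), since only in this form can Theorem~\ref{III.2} be applied directly.
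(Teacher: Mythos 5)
Your proposal is correct and follows essentially the same route as the paper: you form $R=K\cap L$, identify $R^*$ by the matching condition $\Gamma_0\widehat f=\Upsilon_0\widehat g$, build the same boundary triplet $\widetilde\Pi$ (your $\widetilde\Gamma_1$ coincides with the paper's $(\Gamma_1-\real(D)\Gamma_0)\hat f+\Upsilon_1\hat g$), compute $\widetilde M(\lambda)=M(\lambda)-D$ via Lemma~\ref{glem}(iv), and invoke Theorem~\ref{III.2}. The only cosmetic difference is that you make the Green's-identity and deficiency-index checks slightly more explicit than the paper does.
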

\begin{proof}
We are going to apply Theorem \ref{III.2} to the scattering system
$\{K,L\}$. 
For this we consider the 
symmetric operator $R=K\cap L$ and note that the operator $K$ is given by
$\ker(\Gamma_0)\oplus\ker(\Upsilon_0)$.
Hence the boundary condition 
$\Gamma_0\hat f-\Upsilon_0\hat g=0$ in $\dom(L)$ is automatically fulfilled and this implies
that the intersection $R=K\cap L$
given by 
\begin{equation*}
\begin{split}
R\begin{pmatrix} f \\ g\end{pmatrix}&=K
\begin{pmatrix} f \\ g\end{pmatrix}=L\begin{pmatrix} f \\ g\end{pmatrix}\\
\dom(R)&=\left\{\begin{pmatrix} f \\ g\end{pmatrix}
\in\dom K:(\Gamma_1-\real(D)\Gamma_0)\hat f = -\Upsilon_1\hat g\right\},
\end{split}
\end{equation*}
where $\hat f=\{f,Af\}$ and $\hat g=\{g,T_0g\}$. 
It is not difficult to verify that the adjoint operator $R^*$  has the form
\begin{equation*}
R^*=\left\{\hat f\oplus\hat g\in S^*\oplus T^*:\Gamma_0\hat f=\Upsilon_0\hat g\right\}
\end{equation*}
and $\widetilde\Pi=\{\dC^n,\widetilde\Gamma_0,\widetilde\Gamma_1\}$,
where
\begin{equation*}
\widetilde\Gamma_0(\hat f\oplus\hat g)=\Gamma_0\hat f\quad\text{and}\quad\widetilde\Gamma_1(\hat f\oplus\hat g)=
(\Gamma_1-\real(D)\Gamma_0)\hat f+\Upsilon_1\hat g,
\end{equation*}
is a boundary triplet for $R^*$. Observe that $K=\ker(\widetilde\Gamma_0)$ and $L=\ker(\widetilde\Gamma_1)$. 
This also implies that the difference of the resolvents of $K$ and $L$ in \eqref{klresdiff} is a rank $n$ operator;
cf. Corollary~\ref{pertuprop}.

Let us compute the Weyl function $\widetilde M$ corresponding to the boundary triplet $\widetilde\Pi$.
For $\lambda\in\dC_+$ and $\hat f\oplus\hat g\in\widehat\cN_{\lambda,R^*}$ we have 
$\hat f\in\widehat\cN_{\lambda,S^*}$, 
$\hat g\in\widehat\cN_{\lambda,T^*}$ and $\Gamma_0\hat f=\Upsilon_0\hat g$. 
Hence the definition of the Weyl function and Lemma~\ref{glem} imply
\begin{equation*}
\begin{split}
\widetilde M(\lambda)\widetilde\Gamma_0(\hat f\oplus\hat g)&=
\widetilde\Gamma_1(\hat f\oplus \hat g)=
\Gamma_1\hat f-\real(D)\Gamma_0\hat f+\Upsilon_1\hat g\\
&=M(\lambda)\Gamma_0\hat f-\real(D)\Gamma_0\hat f- i\imag (D)\Upsilon_0\hat g\\
&=(M(\lambda)-D)\Gamma_0\hat f=(M(\lambda)-D)\widetilde\Gamma_0(\hat f\oplus\hat g)
\end{split}
\end{equation*}
and therefore $\widetilde M(\lambda)=M(\lambda)-D$ 
for $\lambda\in\dC_+$. As $D$ is a dissipative matrix and $\ker(\imag (D))=\{0\}$
by assumption it follows that 
$$\imag(\widetilde M(\lambda+i0))=\imag(M(\lambda+i0))-\imag(D)$$ 
is uniformly positive and hence
$\ran(\imag (\widetilde M(\lambda+i0)))=\dC^n$. Now Theorem~\ref{III.2} 
applied to the boundary triplet $\widetilde\Pi$ and the corresponding Weyl function $\widetilde M$ 
yields the statement of Theorem~\ref{couplescat}.
\end{proof}

For our later purposes it is useful to express the 
scattering matrix $\{S_{KL}(\lambda)\}_{\lambda\in\dR}$ in Theorem~\ref{couplescat}
in a slightly different form. The following proposition extends \cite[Theorem 3.6]{BMN06b} to the case where $S=A\cap B$
is not necessarily densely defined. The proof is almost the same and will not be repeated.
\begin{prop}\label{sklprop}
Let the assumptions of Theorem~\ref{couplescat} be satisfied, assume, in addition, that $S=A\cap B$ is simple and 
let $L^2(\bR,d\gl,\kH_\gl)$, $\kH_\gl = \ran(\imag(M(\gl)))$, $M(\gl)
:= M(\gl+i0)$, be a spectral representation of $A^{ac}$. 

Then
$L^2(\bR,d\gl,\kH_\gl \oplus \dC^n)$ is a spectral representation of
$K^{ac} = A^{ac} \oplus T_0$ such that the scattering matrix $\{S_{KL}(\lambda)\}_{\gl \in \bR}$
of the scattering system $\{K,L\}$ can be expressed by 
\begin{equation*}
S_{KL}(\lambda)=
\begin{pmatrix} 
I_{\cH_\lambda} & 0 \\ 0 & I_{\dC^n} 
\end{pmatrix}+
2i\begin{pmatrix} 
S_{11}(\lambda) & S_{12}(\lambda) \\ 
S_{21}(\lambda) & S_{22}(\lambda) 
\end{pmatrix}
\in\bigl[
\cH_\lambda
\oplus \dC^n\bigr]
\end{equation*}
for a.e. $\lambda\in\dR$, where
\begin{equation*}
\begin{split}
S_{11}(\lambda)&=\sqrt{\imag(M(\lambda))}\,\bigl(D-M(\lambda)\bigr)^{-1}\sqrt{\imag (M(\lambda))},\\
S_{12}(\lambda)&=\sqrt{\imag(M(\lambda))}\,\bigl(D-M(\lambda)\bigr)^{-1}\sqrt{-\imag (D)},\\
S_{21}(\lambda)&=\sqrt{-\imag(D)}\,\bigl(D-M(\lambda)\bigr)^{-1}\sqrt{-\imag (M(\lambda))},\\
S_{22}(\lambda)&=\sqrt{-\imag(D)}\,\bigl(D-M(\lambda)\bigr)^{-1}\sqrt{-\imag (D)}.
\end{split}
\end{equation*}
\end{prop}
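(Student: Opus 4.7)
The plan is to deduce the proposition from the compact form of $S_{KL}(\lambda)$ already obtained in Theorem~\ref{couplescat} by rewriting it in the product spectral representation that arises naturally from the decomposition $K=A\oplus T_0$. Since $S=A\cap B$ is assumed simple and $T$ is simple by Lemma~\ref{glem}(i), Theorem~\ref{scattering} provides $L^2(\bR,d\lambda,\cH_\lambda)$ as a spectral representation of $A^{ac}$, while the standard Plancherel/Fourier representation of the momentum-type operator $T_0$ gives $L^2(\bR,d\lambda,\dC^n)$ for $T_0$. Their direct sum is a spectral representation of $K^{ac}=A^{ac}\oplus T_0$, so one expects the scattering matrix to be a $2\times 2$ block on $\cH_\lambda\oplus\dC^n$.

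First I would describe the $\widetilde \gamma$-field of the boundary triplet $\widetilde\Pi=\{\dC^n,\widetilde\Gamma_0,\widetilde\Gamma_1\}$ constructed in the proof of Theorem~\ref{couplescat}. The defining identity $\widetilde\Gamma_0(\hat f\oplus\hat g)=\Gamma_0\hat f=\Upsilon_0\hat g$ shows that for $h\in\dC^n$,
\[
\widetilde\gamma(\lambda)h=\gamma(\lambda)h\,\oplus\,\gamma_T(\lambda)h,
\]
where $\gamma(\cdot)$ and $\gamma_T(\cdot)$ are the $\gamma$-fields of $\Pi$ and $\Pi_T$. Applying \eqref{imm} to both summands and using Lemma~\ref{glem}(iv) gives $\imag(\widetilde M(\lambda+i0))=\imag(M(\lambda+i0))+(-\imag D)$, so the fibre $\ran(\imag\widetilde M(\lambda))=\dC^n$ splits, after the abstract identification of Theorem~\ref{scattering}, into the two summands $\cH_\lambda=\ran(\imag M(\lambda))$ and $\ran(-\imag D)=\dC^n$ carried respectively by $\gamma(\lambda)$ and $\gamma_T(\lambda)$.

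Next I would invoke the abstract representation formula (Theorem~\ref{A.II}, used exactly as in the proof of Theorem~\ref{scattering}) for the scattering system $\{K,L\}$ with boundary triplet $\widetilde\Pi$ and Weyl function $\widetilde M(\lambda)=M(\lambda)-D$. Transporting the resulting single-block formula through the isometric embedding
\[
\dC^n\ni h\longmapsto\bigl(\sqrt{\imag M(\lambda)}\,h,\;\sqrt{-\imag D}\,h\bigr)\in\cH_\lambda\oplus\dC^n
\]
replaces the factors $\sqrt{\imag\widetilde M(\lambda)}$ in the sandwich $\sqrt{\imag\widetilde M(\lambda)}(D-M(\lambda))^{-1}\sqrt{\imag\widetilde M(\lambda)}$ by the column and row maps with blocks $\sqrt{\imag M(\lambda)}$ and $\sqrt{-\imag D}$. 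Expanding the product in block form then yields directly the four expressions $S_{11},S_{12},S_{21},S_{22}$, and the overall sign $+2i$ comes from $D-M(\lambda)=-(M(\lambda)-D)$ together with $-2i$ in \eqref{scatformula}. The main obstacle is the bookkeeping of this block expansion, and in particular verifying that the identification of spectral fibres above is genuinely unitary onto $\cH_\lambda\oplus\dC^n$; this is the computation carried out in detail in \cite[Theorem~3.6]{BMN06b}, and the only new ingredient needed here is that $S=A\cap B$ may be nondensely defined, which has already been accommodated in Proposition~\ref{pertupropd}, Corollary~\ref{pertuprop} and Theorem~\ref{scattering}. Hence the argument of \cite{BMN06b} applies verbatim.
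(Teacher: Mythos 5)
Your proposal is correct and follows essentially the same route as the paper, which gives no independent argument but states that the proof is the one of \cite[Theorem 3.6]{BMN06b} adapted to the possibly nondensely defined $S=A\cap B$; your reduction via the $\widetilde\gamma$-field identity $\widetilde\gamma(\lambda)h=\gamma(\lambda)h\oplus\gamma_T(\lambda)h$ and the column map $h\mapsto\bigl(\sqrt{\imag M(\lambda)}\,h,\sqrt{-\imag D}\,h\bigr)$ is precisely the mechanism of that computation. The only caution is that this column map is an isometry of $\dC^n$ \emph{into} $\cH_\lambda\oplus\dC^n$, not onto it; on the orthogonal complement of its range $S_{KL}(\lambda)$ acts as the identity, which is what matches the $I_{\kK_\gl}$ block of Theorem~\ref{couplescat}, and pinning down that this complement really is the fibre of $H^{ac}$ in the spectral representation of $K^{ac}=A^{ac}\oplus T_0$ is exactly the bookkeeping you correctly defer to \cite{BMN06b}.
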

\begin{rem}
If $S=A\cap B$ is simple we find by combining Theorem  \ref{couplescat} with Proposition \ref{sklprop} that 
$\dim(\kK_\gl) = \dim(\kH_\gl)$ holds for
a.e. $\gl \in \dR$, i.e., the spectral multiplicity of $H^{ac}$,
where $H$ is the maximal self-adjoint part of $R = K \cap L$ is equal to
the spectral multiplicity of $A^{ac}$.
\end{rem}

In the following we are going to interpret the right lower corner $I+2iS_{22}$
of the scattering matrix $\{S_{KL}(\lambda)\}_{\lambda\in\dR}$ 
in Proposition~\ref{sklprop} as the scattering matrix corresponding
to a Lax-Phillips scattering system; see, e.g., \cite{BW83,LP67} for further details.
For this purpose we decompose 
$L^2(\dR,\dC^n)$ into the orthogonal sum of the subspaces
\bed
\kD_-:=L^2(\dR_-,\dC^n)\quad\text{and}\quad\kD_+:=L^2(\dR_+,\dC^n),
\eed
and denote the natural embeddings of $\cD_\pm$ into $\gotH\oplus L^2(\dR,\dC^n)$ by $J_\pm$.
The subspaces $\cD_+$ and $\cD_-$ are called
{\it outgoing} and {\it incoming subspaces}, respectively, for the selfadjoint dilation
$L$ in $\gotH\oplus L^2(\dR,\dC^n)$, i.e.
\begin{equation*}
e^{-itL} \cD_\pm \subseteq\cD_\pm,\,\,\,\,t\in\dR_\pm,  \quad\text{and}\quad
\bigcap_{t\in \bR}e^{-itL}\kD_\pm = \{0\}
\end{equation*}
hold. The system $\{L,\cD_-,\cD_+\}$ is called {\it Lax-Phillips scattering system}
and the corresponding {\it Lax-Phillips wave operators} are defined by
\bed
\gO_\pm := \slim_{t\to\pm\infty}e^{itL}J_\pm
e^{-itT_0}:L^2(\dR,\dC^n)\rightarrow\gotH\oplus L^2(\dR,\dC^n);
\eed
cf. \cite{BW83}. Since $\slim_{t\to\pm\infty}J_\mp e^{-itT_0} = 0$
the restrictions of the wave operators $W_\pm(L,K)$
of the scattering system $\{K,L\}$ onto $L^2(\dR,\dC^n)$
coincide with the Lax-Phillips wave operators $\Omega_\pm$ and hence 
the {\it Lax-Phillips scattering operator} $S^{LP} := \gO^*_+\gO_-$
is given by $S^{LP}= P_{L^2}S_{KL}\,\iota_{L^2}$,
where $S_{KL}$ is the scattering operator
of the scattering system $\{K,L\}$, $P_{L^2}$ is 
the orthogonal projection from $\gotH\oplus L^2(\dR,\dC^n)$ onto  
$L^2(\dR,\dC^n)$ and $\iota_{L^2}$ denotes the canonical embedding. 
Hence the Lax-Phillips scattering operator $S^{LP}$
is a contraction in $L^2(\dR,\dC^n)$ and commutes with 
the selfadjoint differential operator $T_0$.
Therefore $S^{LP}$ is unitarily equivalent to a multiplication operator
induced by a family $\{S^{LP}(\lambda)\}_{\lambda\in\dR}$ of 
contractive operators in $L^2(\dR,\dC^n)$; this family
is called the {\it Lax-Phillips scattering matrix}.

The above considerations together with Proposition~\ref{sklprop} immediately
imply the following corollary on the representation of 
the Lax-Phillips scattering matrix; cf. \cite[Corollary 3.10]{BMN06b}.
\begin{cor}\label{lax1}
Let the assumptions of Proposition~\ref{sklprop} be satisfied. 
If $\{L,\cD_-,\cD_+\}$ is the Lax-Phillips scattering
system from above, then the Lax-Phillips scattering matrix
$\{S^{LP}(\gl)\}_{\lambda\in\dR}$ admits the representation
\bed
S^{LP}(\lambda)= I_{\dC^n}+2i\sqrt{-\imag(D)}\bigl(D-M(\lambda)\bigr)^{-1}
\sqrt{-\imag(D)}
\eed
for a.e. $\lambda\in\bR$, where $M(\lambda)=M(\lambda+i0)$.
\end{cor}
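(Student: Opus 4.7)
The plan is to derive the formula directly from Proposition~\ref{sklprop} via the identity $S^{LP} = P_{L^2}S_{KL}\iota_{L^2}$ recorded just before the statement of the corollary. The essential observation is that, in the spectral representation produced by Proposition~\ref{sklprop}, the second summand in each fibre corresponds exactly to the $L^2(\dR,\dC^n)$ part of the dilation space, so that $S^{LP}(\lambda)$ is nothing but the lower-right block of the matrix form of $S_{KL}(\lambda)$.

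More precisely, Proposition~\ref{sklprop} asserts that $L^2(\bR, d\lambda, \cH_\lambda \oplus \dC^n)$ is a spectral representation of $K^{ac} = A^{ac} \oplus T_0$. Because $K$ acts diagonally with respect to the orthogonal decomposition $\gotH \oplus L^2(\dR, \dC^n)$, the first summand $\cH_\lambda$ in each fibre arises from the spectral representation of $A^{ac}$ on $\gotH$ (cf.\ Theorem~\ref{scattering}), while the second summand $\dC^n$ arises from $T_0$ on $L^2(\dR, \dC^n)$, the free first-order differential operator $T_0$ having constant spectral multiplicity $n$. Consequently, under this identification the canonical embedding $\iota_{L^2} : L^2(\dR, \dC^n) \hookrightarrow \gotH \oplus L^2(\dR, \dC^n)$ and the orthogonal projection $P_{L^2}$ act fibrewise as the embedding of $\dC^n$ into $\cH_\lambda \oplus \dC^n$ and the projection of $\cH_\lambda \oplus \dC^n$ onto $\dC^n$, respectively.

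Applying $S^{LP} = P_{L^2} S_{KL} \iota_{L^2}$ fibrewise therefore extracts the lower-right block of the $2\times 2$ matrix representation of $S_{KL}(\lambda)$ in Proposition~\ref{sklprop}, yielding
\[
S^{LP}(\lambda) = I_{\dC^n} + 2i\, S_{22}(\lambda).
\]
Substituting the explicit expression $S_{22}(\lambda) = \sqrt{-\imag(D)}\,(D-M(\lambda))^{-1}\sqrt{-\imag(D)}$ from Proposition~\ref{sklprop} then produces the claimed representation.

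I do not expect any genuine obstacle here; the only delicate point is verifying that the spectral representation of Proposition~\ref{sklprop} is compatible with the geometric splitting $\gotH \oplus L^2(\dR, \dC^n)$, which is immediate since $K = A \oplus T_0$ is already diagonal with respect to this decomposition and the fibre $\cH_\lambda \oplus \dC^n$ reflects exactly this direct sum. Everything else is a direct algebraic reading-off from Proposition~\ref{sklprop}.
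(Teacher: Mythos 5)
Your proposal is correct and follows essentially the same route as the paper: the text preceding Corollary~\ref{lax1} already establishes $S^{LP}=P_{L^2}S_{KL}\,\iota_{L^2}$ via the vanishing of $\slim_{t\to\pm\infty}J_\mp e^{-itT_0}$, and the paper then reads off the lower-right block $I_{\dC^n}+2iS_{22}(\lambda)$ of the representation in Proposition~\ref{sklprop}, exactly as you do. Your added remark on the compatibility of the fibre decomposition $\cH_\lambda\oplus\dC^n$ with the splitting $\gotH\oplus L^2(\dR,\dC^n)$ makes explicit a point the paper leaves implicit, but it is the same argument.
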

We mention that Corollary~\ref{lax1} also implies a 
well-known result of Adamyan and Arov in \cite{AA65a,AA65b,AA66a,AA66b}
on the relation of the Lax-Phillips scattering matrix with the 
characteristic function of the maximal dissipative operator $B$;
see \cite{BMN06b} for further details.

\section{An inverse scattering problem}\label{invsec}

Let $W:\dC_+\rightarrow [\dC^n]$ be a contractive analytic 
matrix function defined on the upper half-plane $\dC_+$.
Then the limit 
\begin{equation*}
W(\gl) = W(\gl+i0)=\lim_{y\to+0}W(\gl + iy)
\end{equation*}
exists for a.e. $\gl \in \bR$. In the following 
theorem we show that under some mild additional conditions the limit of the function $W$ 
can be regarded as the scattering matrix of a Lax-Phillips 
scattering system $\{L,\cD_-,\cD_+\}$, where $L$ is the minimal
selfadjoint dilation of some maximal dissipative operator 
in a Hilbert space $\gotH$ as in the previous section.
\begin{thm}\label{invthm}
Let $W: \dC_+ \rightarrow [\dC^n]$ be a contractive analytic
function such that the conditions 
\begin{equation}\label{5.2}
\ker(I - W(\eta)^*W(\eta)) = \{0\}, \qquad  \eta
\in \dC_+,
\end{equation}
and
\begin{equation}\label{5.3}
\lim_{y\to+\infty}\frac{1}{y}(I - W(iy))^{-1} = 0
\end{equation}
are satisfied. Then the following holds:

\begin{enumerate}

\item[\rm (i)]
There exists a separable Hilbert space $\gotH$, a (in general
nondensely defined) simple symmetric operator $S$ with deficiency indices $(n,n)$ in $\gotH$, 
a boundary triplet $\gP = \{\dC^n,\Gamma_0,\Gamma_1\}$ for $S^*$ with Weyl function $M(\cdot)$ 
and a dissipative matrix $D\in[\dC^n]$ with $\ker(\imag(D))=\{0\}$ such that $W(\cdot)$ admits the representation
\begin{equation}\label{repw}
W(\mu)= I + 2i\sqrt{-\imag(D)}\bigl(D - M(\mu)\bigr)^{-1}\sqrt{-\imag(D)}
\end{equation}
for all $\mu\in\dC_+$ and a.e. $\mu\in\dR$, where $W(\mu)=W(\mu+i0)$ and $M(\mu)=M(\mu+i0)$.

\item[\rm (ii)]  
The function $\dR\ni\mu\mapsto W(\mu)$ is
the Lax-Phillips scattering matrix of the Lax-Phillips scattering
system $\{L,\cD_-,\cD_+\}$, where 
$L$ is the minimal selfadjoint dilation of the maximal dissipative extension
$B = \ker(\gG_1 - D\gG_0)$ of $S$ in 
Theorem~\ref{dilthm} and the incoming and outgoing subspaces are $\cD_\pm=L^2(\dR_\pm,\dC^n)$.

\item[\rm (iii)]
If, in addition, the condition
\begin{equation}\label{5.5}
\lim_{y\to +\infty}y^{1/2}\left\|\sqrt{I -
       W(iy)^*W(iy)}\;(I - W(iy))^{-1}h\right\| = \infty
\end{equation}
holds for all $h\in\kH$, $h\not=0$, then the symmetric operator $S$ is densely defined.

\item[\rm (iv)]
If $W(\cdot)$ is an inner function, i.e. $W(\mu+ i0)$ is unitary for
a.e. $\mu \in \bR$, then the spectrum of the selfadjoint operator 
$A = \ker(\gG_0)$ is purely singular and the absolutely continuous part $A^{ac}$ of $A$ is trivial.
\end{enumerate}
\end{thm}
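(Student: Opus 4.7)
The idea is to invert the desired representation \eqref{repw} by a Cayley-like transform, realize the outcome as a matrix-valued Nevanlinna function via Theorem~\ref{nevweyl}, and then read off all four conclusions. For $\gl\in\dC_+$ set
\[
V(\gl) := i\bigl(I + W(\gl)\bigr)\bigl(I - W(\gl)\bigr)^{-1}.
\]
Assumption \eqref{5.2} forces $W(\gl)$ to be a strict contraction in $\dC^n$, so $1\in\rho(W(\gl))$ and $V(\gl)$ is well-defined. A direct calculation gives
\[
\imag V(\gl) = (I - W(\gl)^*)^{-1}\bigl(I - W(\gl)^*W(\gl)\bigr)(I - W(\gl))^{-1},
\]
so $V(\cdot)$ is a $[\dC^n]$-valued Nevanlinna function with $0\in\rho(\imag V(\gl))$ for every $\gl\in\dC_+$. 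Rewriting $V(iy) = -iI + 2i(I - W(iy))^{-1}$ and invoking \eqref{5.3}, the growth condition \eqref{2.9} is immediate.

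Theorem~\ref{nevweyl} then supplies a separable Hilbert space $\gotH$, a simple closed symmetric operator $S$ of deficiency indices $(n,n)$, and a boundary triplet $\Pi = \{\dC^n,\Gamma_0,\Gamma_1\}$ for $S^*$ whose Weyl function $M$ coincides with $V$ on $\dC\setminus\dR$ and for which $A := \ker\Gamma_0$ is selfadjoint. Choose $D := -iI$; then $\imag D = -I$, $\ker(\imag D) = \{0\}$ and $\sqrt{-\imag D} = I$, so the identity $W = I - 2i(V + iI)^{-1}$ becomes exactly \eqref{repw}, first for $\gl\in\dC_+$ and then, by boundary values, for a.e.\ $\gl\in\dR$; this establishes (i). Set $B := \ker(\Gamma_1 - D\Gamma_0) = \ker(\Gamma_1 + i\Gamma_0)$; it is maximal dissipative, and \eqref{bij} immediately yields $A\cap B = \ker(\Gamma_0,\Gamma_1)^\top = S$, which is simple by construction. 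Letting $L$ be the minimal selfadjoint dilation of $B$ from Theorem~\ref{dilthm}, all hypotheses of Corollary~\ref{lax1} are satisfied, and the corollary identifies the Lax-Phillips scattering matrix of $\{L,\cD_-,\cD_+\}$ with the right-hand side of \eqref{repw}, i.e., with $W(\cdot)$; this gives (ii). For (iii), substituting $g = (I - W(iy))^{-1}h$ in the formula for $\imag V$ above rearranges to
\[
\eta\,\imag\bigl(M(i\eta)h,h\bigr) = \bigl\|\sqrt{y}\,\sqrt{I - W(iy)^*W(iy)}\,(I - W(iy))^{-1}h\bigr\|^2,
\]
so the extra hypothesis \eqref{5.5} is precisely the second growth condition in Theorem~\ref{nevweyl} that forces $S$ to be densely defined.

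For (iv), assume $W$ is inner, so $W(\gl+i0)^*W(\gl+i0) = I$ for a.e.\ $\gl\in\dR$. Since \eqref{5.2} excludes $W\equiv I$, the bounded holomorphic function $\det(I - W(\cdot))$ on $\dC_+$ is not identically zero, so its boundary values vanish only on a Lebesgue null set; combined with the formula for $\imag V$ this forces $\imag M(\gl+i0) = 0$ for a.e.\ $\gl\in\dR$. The identity $\imag M(\gl+i0) = \pi(1+\gl^2)K(\gl)$ derived in \eqref{mk}, which depends only on $A$ and the $\gamma$-field of $\Pi$, then yields $\gga(-i)^* E^{ac}_A(\gd)\gga(-i) = 0$ for every Borel $\gd\subset\dR$; hence $\cN_{-i} = \ran\gga(-i)$ lies in $\gotH^{ac}(A)^\bot$. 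Since $\gotH^{ac}(A)^\bot$ is invariant under the bounded functional calculus of $A$, identity \eqref{gammamu} gives $\cN_\gm\subset\gotH^{ac}(A)^\bot$ for every $\gm\in\dC\setminus\dR$, and the simplicity description \eqref{simpl} then forces $\gotH = \gotH^{ac}(A)^\bot$, whence $A^{ac}$ is trivial and the spectrum of $A$ is purely singular. I expect the main technical obstacle to be this last propagation step: carefully combining \eqref{mk}, the simplicity of $S = A\cap B$, and the $\gamma$-field transformation \eqref{gammamu} to lift the vanishing of $\imag M$ on a single defect subspace to triviality of $A^{ac}$ on all of $\gotH$.
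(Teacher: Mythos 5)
Your proof is correct and, for parts (i)--(iii), follows essentially the same route as the paper: the Cayley-type transform $M=i(I+W)(I-W)^{-1}$, the verification that it is a Nevanlinna function with $0\in\rho(\imag(M(\eta)))$ satisfying \eqref{2.9}, the realization via Theorem~\ref{nevweyl}, the choice $D=-iI$ turning $W=I-2i(M+i)^{-1}$ into \eqref{repw}, and Corollary~\ref{lax1} for the Lax--Phillips identification; part (iii) is the same computation with $\imag(M(iy))$. (One small wording issue: \eqref{5.2} does not make $W(\eta)$ a strict contraction; together with $\Vert W(\eta)\Vert\leq 1$ it only gives $\ker(I-W(\eta)^*W(\eta))=\{0\}$, hence $\ker(I-W(\eta))=\{0\}$ and, in finite dimensions, $1\in\rho(W(\eta))$ --- which is all you actually use.) The one genuine divergence is the endgame of (iv): after both arguments deduce $\imag(M(\lambda+i0))=0$ a.e.\ from innerness and the a.e.\ nonvanishing of the boundary values of $\det(I-W(\cdot))$, the paper simply cites \cite[Theorem 4.3(iii)]{BMN02} to conclude that $A=\ker(\Gamma_0)$ has no absolutely continuous spectrum, whereas you give a self-contained derivation: the identity $\imag(M(\lambda+i0))=\pi(1+\lambda^2)K(\lambda)$ from \eqref{mk} forces $P^{ac}(A)\gamma(-i)=0$, i.e.\ $\cN_{-i}\subset\gotH^{ac}(A)^\bot$, the formula \eqref{gammamu} propagates this to every defect space $\cN_\mu$, and simplicity of $S$ via \eqref{simpl} then yields $\gotH^{ac}(A)=\{0\}$. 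Your version buys independence from the external reference at the cost of reusing the computation \eqref{k12}--\eqref{mk}, which, as you correctly note, involves only $A$ and the $\gamma$-field and therefore applies verbatim in this setting; both arguments are valid.
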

\begin{proof}
(i) 
Observe that condition \eqref{5.2} together with 
$\Vert W(\eta)\Vert\leq 1$ implies that
$\ker(I-W(\eta))=\{0\}$ holds for all $\eta\in\dC_+$. Indeed, 
for $x\in\ker(I-W(\eta))$ we have
$\Vert W(\eta) x\Vert=\Vert x\Vert$ and hence 
\begin{equation*}
\bigl((I-W(\eta)^*W(\eta))x,x\bigr)=0
\end{equation*}
which yields $x = 0$. 
We define a function $M:\dC_+\rightarrow [\dC^n]$ by
\begin{equation}\label{mfct}
\dC_+\ni\eta\mapsto M(\eta):= i(I + W(\eta))(I - W(\eta))^{-1}
\end{equation}
and we extend $M$ to the lower half-plane by $\dC_-\ni\eta\mapsto M(\eta):=M(\bar\eta)^*$.
Then $M$ is analytic and  a straightforward computation shows
\begin{equation}\label{5.5a}
\imag(M(\eta)) = 
(I - W(\eta)^*)^{-1}(I - W(\eta)^*W(\eta))(I - W(\eta))^{-1} \ge 0.
\end{equation}
for $\eta\in\dC_+$.
Hence $M$ is a Nevanlinna function and condition \eqref{5.2} implies 
$\ker(\imag(M(\eta))) = \{0\}$.
From condition \eqref{5.3} we obtain 
\begin{equation*}
\slim_{y\to+\infty}\frac{1}{y}M(iy) = 
\slim_{y\to+\infty}\frac{i}{y} (I + W(iy))(I - W(iy))^{-1}= 0.
\end{equation*}
By Theorem~\ref{nevweyl} there exists a separable Hilbert 
space $\gotH$, a (in general nondensely defined) 
simple symmetric operator $S$ with deficiency indices $(n,n)$ 
and a boundary triplet $\Pi=\{\dC^n,\Gamma_0,\Gamma_1\}$ 
for $S^*$ such that $M$ is the corresponding
Weyl function. For $\eta\in\dC_+$ we have
\begin{equation}\label{5.7}
W(\eta)  = I - 2i(M(\eta) + i)^{-1}, \quad \eta \in \dC_+. 
\end{equation}
Setting $D:=-iI$ we have $\sqrt{-\imag(D)}=I$ and hence the representation \eqref{repw} follows from \eqref{5.7}.
\vskip 0.1cm \noindent
(ii) From  Corollary~\ref{lax1} one immediately gets that  
$W$ can be regarded as the Lax-Phillips scattering matrix 
of the Lax-Phillips scattering system $\{L,\cD_-,\cD_+\}$.
\vskip 0.1cm \noindent
(iii)
Making use of \eqref{5.5a} one easily verifies that the condition
\eqref{5.5} yields 
\begin{equation*}
\lim_{y\to\infty}y\,\imag(M(iy)h,h) = \infty,\qquad h \in \kH \setminus
\{0\}.
\end{equation*} 
Hence the operator $S$ is densely defined by Theorem \ref{nevweyl}.
\vskip 0.1cm \noindent
(iv) We consider the analytic function $w(\eta) := \det(I
- W(\eta))$, $\eta \in \bC_+$. Since the limit $W(\gl + i0) :=
\lim_{y\to+0}W(\gl + iy)$ exists for a.e. $\gl \in \dR$ the 
limit $w(\gl + i0) := \lim_{y\to+0}w(\gl + iy)$ exist for a.e. $\gl
\in \dR$, too. If the Lebesgue measure of the set $\{\gl \in \dR: w(\gl +
i0) = 0\}$ is different from zero, then $w(\eta) \equiv 0$ for all $\eta \in
\dC_+$ by the Lusin-Privalov theorem \cite[Section III]{Koo80}
  but this is impossible by assumption \eqref{5.2}, cf. proof of (i).
Hence, the set $\{\gl \in \dR: w(\gl + i0) = 0\}$ has Lebesgue
measure zero.  Therefore, the operator $(I - W(\gl + i0))^{-1}$ exist
for a.e. $\gl \in \dR$. Using \eqref{5.5a} we find that
$\lim_{y\to+0}\imag(M(\gl + iy)) = 0$ for a.e. $\gl \in \dR$. By
\cite[Theorem 4.3(iii)]{BMN02} we get that the selfadjoint operator $A =\ker(\gG_0)$ has no absolutely continuous spectrum, i.e.,
the absolutely continuous part of $A$ is trivial. 
\end{proof}

We remark, that
the representation \eqref{repw} can also be obtained from \cite[Proposition 7.5]{DM95}. In fact, in the
special case considered here some parts of
the proof of Theorem~\ref{invthm}~(i) coincide with the proof of \cite[Proposition 7.5]{DM95}.

The Lax-Phillips scattering system and the selfadjoint dilation $L$ in Theorem~\ref{invthm} can be made more explicit. Let $W:\dC_+\rightarrow [\dC^n]$ be
as in the assumptions of Theorem~\ref{invthm} and define the function
$M$ by
\begin{equation*}
M(\eta)=i(I+W(\eta))(I-W(\eta))^{-1},\qquad \eta\in\dC_+
\end{equation*}
and $M(\eta)=M(\bar\eta)^*$ as in \eqref{mfct}. Then $M$ is $[\dC^n]$-valued Nevanlinna function and hence 
$M$ admits an integral representation of the form
\begin{equation}\label{intrep}
 M(\eta)=\alpha + 
\int_{\dR}\left(\frac{1}{t-\eta}-\frac{t}{1+t^2}\right)d\Sigma(t),\qquad\eta\in\dC\backslash\dR,
\end{equation}
where $\alpha$ is a symmetric matrix and $t\mapsto \Sigma(t)$ is a $[\dC^n]$-valued nondecreasing symmetric matrix function on $\dR$
such that $\int(1+t^2)^{-1}d\Sigma(t)\in[\dC^n]$. We note that due to condition (ii) in 
Theorem~\ref{invthm} the linear term in the integral representation \eqref{intrep} is absent.
Let $L^2_\Sigma(\dR,\dC^n)$ be the Hilbert space of $\dC^n$-valued
functions as in \cite{Berez68,GG91,Mal-Mal03}. 
%with the property 
%
%
%\begin{equation*}
%\Vert g\Vert_{L^2_\Sigma}:=
%\left(\int_\dR \bigl(d\Sigma(t)g(t),g(t)\bigr)\right)^{\frac{1}{2}}<\infty
%\end{equation*}
%
%
%and denote the corresponding scalar product by $(\cdot,\cdot)_{L^2_\Sigma}$, for details on the
%space $L^2_{\Sigma}(\dR,\dC^n)$; see, e.g., \cite{Berez68,GG91,Mal-Mal03}. 
It was shown in \cite{Mal-Mal03} that the mapping 
$$f\mapsto \int_\dR d\Sigma(t)f(t)$$
defined originally on the space $C_0(\dR,\dC^n)$ of (strongly) continuous functions with compact support
admits a continuous extension to an operator from $L^2_\Sigma(\dR,\dC^n)$ into $\dC^n$.
According to \cite{DM95,Mal-Mal03} 
the adjoint of the (in general nondensely defined) closed
symmetric operator
\begin{equation*}
\begin{split}
(Sf)(t)&=tf(t),\\
\dom(S)&=\left\{f\in L^2_\Sigma(\dR,\dC^n):tf(t)\in L^2_\Sigma(\dR,\dC^n),\int_\dR d\Sigma(t)f(t)=0\right\},
\end{split}
\end{equation*}
is given by the linear relation 
\begin{equation*}
S^*=\left\{\left\{f(t)+\frac{t}{1+t^2}h,tf(t)-\frac{1}{1+t^2}h\right\}:f(t),tf(t)\in L^2_\Sigma(\dR,\dC^n),\,h\in\dC^n\right\},
\end{equation*}
and that $\{\dC^n,\Gamma_0,\Gamma_1\}$, where
\begin{equation*}
\Gamma_0\hat f:=h\qquad\text{and}\qquad \Gamma_1\hat f:=\alpha h+\int_\dR d\Sigma(t) f(t),
\end{equation*}
$\hat f=\{f(t)+t(1+t^2)^{-1}h,tf(t)-(1+t^2)^{-1}h\}\in S^*$,
is a boundary triplet for $S^*$ with corresponding Weyl function $M(\cdot)$. Note that here $A_0=\ker(\Gamma_0)$ is the 
usual maximal multiplication operator in $L^2_\Sigma(\dR,\dC^n)$. 
\begin{cor}\label{invcor}
Let $W: \bC_+ \rightarrow [\bC^n]$ be a contractive analytic
function which satisfies the conditions \eqref{5.2} and \eqref{5.3} in Theorem~\ref{invthm}. Then there exists a symmetric matrix $\alpha\in[\dC^n]$
and a $[\dC^n]$-valued nondecreasing symmetric matrix function $\Sigma(\cdot)$ on $\dR$ such
that
\begin{equation*}
W(\mu)=I-2i\left(\alpha+i+\int_{\dR}\left(\frac{1}{t-\mu}-
\frac{t}{1+t^2}\right)d\Sigma(t)\right)^{-1}
\end{equation*}
holds for all $\mu\in\dC_+$ and $\int(1+t^2)^{-1}d\Sigma(t)\in[\dC^n]$. The function $W(\cdot)$ coincides with the Lax-Phillips scattering matrix
of the system $\{L,\cD_-,\cD_+\}$, where $\cD_\pm=L^2(\dR_\pm,\dC^n)$
and 
\begin{equation*}
\begin{split}
L\begin{pmatrix} f(t)+\frac{t}{1+t^2}h\\ g\end{pmatrix} &=\begin{pmatrix} t f(t)-\frac{1}{1+t^2} h\\ ig^\prime\end{pmatrix},\qquad 
\begin{matrix} f(t),t f(t)\in L^2_\Sigma(\dR,\dC^n),\,h\in\dC^n,\\ g\in W^1_2(\dR_-,\dC^n)\oplus W^1_2(\dR_+,\dC^n),\end{matrix}\\
\dom (L)=&\left\{\begin{pmatrix} f(t)+\frac{t}{1+t^2}h\\ g\end{pmatrix}:\begin{matrix} \frac{1}{\sqrt{2}}(g(0+)-g(0-))=h \\ 
\frac{i}{\sqrt{2}} (g(0+)+g(0-))=\alpha h+\int d\Sigma(t)f(t)\end{matrix}\right\}
\end{split}
\end{equation*}
is the minimal selfadjoint dilation in 
$L^2_\Sigma(\dR,\dC^n)\oplus L^2(\dR,\dC^n)$ of the maximal dissipative multiplication 
operator $B=\ker(\Gamma_1+i\Gamma_0)$ in $L^2_\Sigma(\dR,\dC^n)$.
\end{cor}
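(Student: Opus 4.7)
My plan is to assemble the corollary by running Theorem~\ref{invthm} inside the explicit $L^2_\Sigma(\dR,\dC^n)$-model described in the paragraphs preceding the statement, and then to specialize Theorem~\ref{dilthm} to the choice $D=-iI$ already used in the proof of Theorem~\ref{invthm}(i).

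First I would set $M(\eta):=i(I+W(\eta))(I-W(\eta))^{-1}$ for $\eta\in\dC_+$ and extend by $M(\eta)=M(\bar\eta)^*$ to $\dC_-$. By Theorem~\ref{invthm}(i) this defines a $[\dC^n]$-valued Nevanlinna function with $\ker(\imag M(\eta))=\{0\}$, and condition~\eqref{5.3} is precisely the assertion that the linear term in the Nevanlinna integral representation~\eqref{intrep} vanishes, so $M$ takes the form
\[
M(\mu)=\alpha+\int_{\dR}\left(\frac{1}{t-\mu}-\frac{t}{1+t^2}\right)d\Sigma(t)
\]
with symmetric $\alpha\in[\dC^n]$ and nondecreasing $\Sigma(\cdot)$ satisfying $\int(1+t^2)^{-1}d\Sigma(t)\in[\dC^n]$. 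Inverting the Cayley-type transform gives $W(\mu)=I-2i(M(\mu)+i)^{-1}$, and substituting the integral representation of $M$ produces the explicit formula for $W$ claimed in the corollary.

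Next I would invoke the $L^2_\Sigma(\dR,\dC^n)$-model recalled in the text: the (possibly nondensely defined) simple closed symmetric operator $S$ of multiplication by the independent variable with the side condition $\int d\Sigma(t)f(t)=0$, together with the boundary triplet $\Pi=\{\dC^n,\Gamma_0,\Gamma_1\}$, $\Gamma_0\hat f=h$, $\Gamma_1\hat f=\alpha h+\int d\Sigma(t)f(t)$, whose Weyl function is precisely the $M$ constructed above. Taking $D:=-iI$ (dissipative, $\ker(\imag D)=\{0\}$, $\real D=0$, $\sqrt{-\imag D}=I$), the maximal dissipative extension $B=\ker(\Gamma_1-D\Gamma_0)=\ker(\Gamma_1+i\Gamma_0)$ is the maximal dissipative multiplication operator in $L^2_\Sigma(\dR,\dC^n)$. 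By Theorem~\ref{invthm}(ii) (equivalently, by Corollary~\ref{lax1}, since $S$ is simple) the boundary-value function $\mu\mapsto W(\mu)$ is the Lax-Phillips scattering matrix of the system $\{L,\cD_-,\cD_+\}$ with $\cD_\pm=L^2(\dR_\pm,\dC^n)$, where $L$ is the minimal selfadjoint dilation of $B$ constructed in Theorem~\ref{dilthm}.

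It remains to make $L$ explicit by substituting the boundary mappings above, together with the formulas from Lemma~\ref{glem}(iii), into the recipe of Theorem~\ref{dilthm}. Since $\real D=0$ and $(-\imag D)^{\pm 1/2}=I$, the two abstract boundary conditions $\Gamma_0\hat f-\Upsilon_0\hat g=0$ and $(\Gamma_1-\real(D)\Gamma_0)\hat f=-\Upsilon_1\hat g$ reduce to $(1/\sqrt 2)(g(0+)-g(0-))=h$ and $(i/\sqrt 2)(g(0+)+g(0-))=\alpha h+\int d\Sigma(t)f(t)$, and the action of $L$ is the componentwise action of $S^*\oplus T^*$ on the pairs described in the corollary, giving the displayed formulas. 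The main obstacle is bookkeeping: confirming simplicity of the $L^2_\Sigma$-model (so that Corollary~\ref{lax1}, rather than only Theorem~\ref{couplescat}, is available), and carefully tracking the normalization factors and signs in Lemma~\ref{glem}(iii) so that the boundary data line up exactly with the explicit description of $\dom(L)$ in the corollary. Neither step is genuinely deep, but both require attention to conventions.
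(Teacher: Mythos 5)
Your proposal is correct and follows essentially the same route the paper intends: the paper gives no separate proof of Corollary~\ref{invcor}, but derives it exactly as you do, from the Cayley-type transform $W(\mu)=I-2i(M(\mu)+i)^{-1}$ of Theorem~\ref{invthm}(i) with $D=-iI$, the integral representation \eqref{intrep} (whose linear term vanishes by \eqref{5.3}), the $L^2_\Sigma(\dR,\dC^n)$ multiplication-operator model with the stated boundary triplet realizing $M(\cdot)$ as its Weyl function, and the dilation recipe of Theorem~\ref{dilthm} combined with Corollary~\ref{lax1}. The bookkeeping issues you flag (simplicity of the model, signs and normalizations in Lemma~\ref{glem}(iii)) are indeed the only points requiring care, and they are resolved exactly as you indicate.
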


\begin{appendix}\label{appen}

\section{Spectral representations and scattering matrix}\label{app}

Let $A$ be a selfadjoint operator in the separable Hilbert space $\gotH$ 
and let $E(\cdot)$ be the corresponding spectral measure defined on the $\gs$-algebra $\kB(\bR)$ of Borel subsets of $\bR$. 
The absolutely continuous and singular part of the measure $E(\cdot)$ is denoted by $E^{ac}(\cdot)$ and $E^s(\cdot)$, respectively.
If $C\in[\kH,\gotH]$ is a Hilbert-Schmidt operator, then by
\cite[Lemma I.11]{BW83} 
\begin{equation*}
\gS(\gd) := C^*E(\gd)C,\qquad \gd \in \kB(\bR),
\end{equation*}
is a trace class valued measure on $\kB(\bR)$ of finite variation.
This measure admits a unique decomposition 
\bed
\gS(\cdot) =  \gS^{s}(\cdot) + \gS^{ac}(\cdot)
\eed
into a singular measure $\gS^{s}(\cdot) = C^*E^s(\cdot)C$ and an
absolutely continuous measure $\gS^{ac}(\cdot) = C^*E^{ac}(\cdot)C$.
According to \cite[Proposition I.13]{BW83} the trace class valued function $\gl\mapsto\gS(\gl) :=
C^*E((-\infty,\gl))C$ admits a derivative $K(\gl) :=
\frac{d}{d\gl}\gS(\gl) \ge 0$ in the trace class norm
for a.e. $\gl \in \bR$ with respect  to the Lebesgue measure $d\gl$ and 
\bed
\gS^{ac}(\gd) = \int_\gd K(\gl) d\gl, \quad \gd \in \kB(\bR)
\eed
holds.
By $\kH_\gl := \overline{\ran(K(\gl))} \subseteq \kH$ we define a
measurable family of subspaces in $\kH$. Let $P(\gl)$ be the
orthogonal projection from $\kH$ onto $\kH_\gl$ and define a
measurable family of projections by
\bed
(Pf)(\gl) := P(\gl)f(\gl), \qquad f \in L^2(\bR,d\gl,\gotH).
\eed
Then $P$ is an orthogonal projection in $L^2(\bR,d\gl,\kH)$ and we denote the range of $P$ by 
$L^2(\bR,d\gl,\kH_\gl)$. In the following we regard
$L^2(\bR,d\gl,\kH_\gl)$ as the direct integral of the measurable
family of subspaces $\{\kH_\gl\}_{\gl \in \bR}$.
\bl\la{A.I}
Let $A$, $E$, $C$ and $K(\lambda)$ be as above and assume that the absolutely continuous subspace $\gotH^{ac}(A)$ satisfies
the condition 
$$\gotH^{ac}(A) =\clospa\{E^{ac}(\gd)\ran(C):\gd \in \kB(\bR)\}.$$
Then the linear extension of the mapping
\begin{equation}\label{specmap}
E^{ac}(\gd)Cf \mapsto \chi_{\gd}(\gl)\sqrt{K(\gl)}f\,\,\,\text{for a.e.}\,\,\lambda\in\dR,\qquad f\in\cH,
\end{equation}
onto the dense subspace $\spa\{E^{ac}(\gd)\ran(C):\gd \in \kB(\bR)\}$ of $\gotH^{ac}(A)$
admits a unique continuation to an isometric isomorphism from $\Phi:\gotH^{ac}(E)\rightarrow L^2(\bR,d\gl,\kH_\gl)$ 
such that
\begin{equation}\label{A.1}
(\Phi E^{ac}(\gd) g)(\gl)  = \chi_\gd(\gl)(\Phi g)(\gl), \qquad g \in \gotH^{ac}(A),
\end{equation}
holds for any $\gd \in \kB(\bR)$. 
\el
\begin{proof}
For $f\in\cH$ and $\delta\in\cB(\dR)$ we have
\bed
\Vert \chi_{\delta}(\cdot)\sqrt{K(\cdot)}f\Vert^2 =
\int_\gd \|\sqrt{K(\gl)}f\|^2_\kH d\gl = \|E^{ac}(\gd)Cf\|^2_\gotH 
\eed
and hence the extension of the mapping \eqref{specmap} onto the subspace 
$\spa\{E^{ac}(\gd)\ran(C):\gd \in \kB(\bR)\}$ is an isometry into $L^2(\bR,d\gl,\kH_\gl)$.
Then the unique extension $\Phi:\gotH^{ac}(A)\rightarrow L^2(\bR,d\gl,\kH_\gl)$ is isometric and
it remains to show that $\Phi$ is onto. Suppose that there exists $h
\in L^2(\bR,d\gl,\kH_\gl)$ such that 
\bed
0 = (\Phi E^{ac}(\gd)Cf,h) = \int_\gd\;(\sqrt{K(\gl)}f,h(\gl))_\kH\,d\gl
\eed
holds for all $\delta\in\cB(\dR)$ and $f\in\cH$. This implies 
$(\sqrt{K(\gl)}f,h(\gl))_\gotH =0$ for a.e. $\gl \in \bR$ and hence $h(\gl) \perp
\kH_\gl$ for a.e. $\gl \in \bR$, thus $h(\gl) = 0$ for a.e. $\gl \in
\bR$. Hence $\Phi$ is surjective. The relation \eqref{A.1}
for $\Phi$ follows from \eqref{specmap}. 
\end{proof}

From \eqref{A.1} we immediately get that
the maximal multiplication operator $\kQ$ in $L^2(\bR,d\gl,\kH_\gl)$, 
 \bead
(\kQ f)(\gl) & := & \gl f(\gl),\\
f \in \dom(\kQ) & := & \{f \in
  L^2(\bR,d\gl,\kH_\gl): \gl f(\gl) \in L^2(\bR,d\gl,\kH_\gl)\}.
\eead
satisfies 
 $\kQ\Phi = \Phi A^{ac}$ and $\varphi(\kQ)\Phi = \Phi\varphi(A^{ac})$ 
for any bounded Borel
measurable function $\varphi(\cdot): \bR \longrightarrow \bR$.
In other words, the direct integral $L^2(\bR,d\gl,\kH_\gl)$ performs
a spectral representation of the absolutely continuous part $A^{ac}$ of the selfadjoint operator $A$.

Suppose now that $B$ is also a selfadjoint operator in $\gotH$ and assume that 
the difference of the resolvents 
\begin{equation*}
(B-i)^{-1}-(A-i)^{-1}
\end{equation*}
is a trace class operator. Then the wave operators
\bed
W_\pm(B,A) := s-\lim_{t\to\pm\infty}e^{itB}e^{-itA}P^{ac}(A)
\eed
exists and are complete; cf., e.g., \cite[Theorem I.1]{BW83}. The scattering 
operator $S_{AB} := W_+(B,A)^*W_-(B,A)$ regarded as an operator in $\gotH^{ac}(A)$
is unitary and commutes with $A$. Therefore there is a
measurable family $\{S_{AB}(\gl)\}_{\gl \in \bR}$ of unitary operators
$ S_{AB}(\gl)\in [\kH_\gl]$
such that $S_{AB}$ is unitarily equivalent to the multiplication operator
$\kS_{AB}$ induced by $\{S_{AB}(\gl)\}_{\gl \in \bR}$ in
$L^2(\bR,d\gl,\kH_\gl)$, that is, $\kS_{AB} = \Phi S_{AB} \Phi^{-1}$.
The measurable family $\{S_{AB}(\gl)\}_{\gl \in \bR}$ is called the
scattering matrix of the scattering system $\{A,B\}$. 

The following theorem on the representation of the scattering matrix is an important ingredient in the proof
of Theorem~\ref{scattering}. A detailed proof of Theorem~\ref{A.II} will appear in a forthcoming paper.
\bt\la{A.II}
Let $A$ and $B$ be selfadjoint operators in the separable Hilbert
space $\gotH$ and suppose that the resolvent difference 
admits the factorization
\bed
 (B-i)^{-1}-(A-i)^{-1}=\phi(A)CGC^*=QC^*,
\eed
where $C\in[\kH,\gotH]$ is a Hilbert-Schmidt
operator, $G\in[\kH]$,  
$\phi(\cdot): \bR \rightarrow \bR$ is a bounded continuous
function and $Q = \phi(A)CG$. Assume that the condition
\be\la{A.5}
\gotH^{ac}(A) = \clospa\bigl\{E^{ac}(\gd)\ran(C):\gd \in \kB(\bR)\bigr\}
\ee
is satisfied and let $K(\lambda) =
\frac{d}{d\gl}C^*E((-\infty,\gl))C$ and $\kH_\gl = \overline{\ran(K(\gl))}$ for a.e. $\lambda\in\dR$. 

Then $L^2(\bR,d\gl,\kH_\gl)$ is a  spectral representation
of $A^{ac}$ and the scattering matrix 
$\{S_{AB}(\gl)\}_{\gl \in \bR}$ of the scattering system $\{A,B\}$ has the representation
\begin{equation}\label{A.3}
S_{AB}(\gl) = I_{\kH_\gl} + 2\pi i (1 + \gl^2)^2
\sqrt{K(\gl)}Z(\gl)\sqrt{K(\gl)} \in[\kH_\gl]
\end{equation}
for a.e. $\gl \in \bR$, where 
\begin{equation}\label{A.2}
Z(\gl) = \frac{1}{\gl+i}Q^*Q +
\frac{\phi(\gl)}{(\gl+i)^2}G + Q_B(\gl+i0)
\end{equation}
and the limit $Q_B(\gl+i0) := \lim_{\varepsilon\to+0}Q^*(B-\lambda-i\varepsilon)^{-1}Q$ is taken in the Hilbert-Schmidt norm.
\et

\end{appendix}

\subsection*{Acknowledgment}

The second author thanks the Weierstrass Institute of Applied
Analysis and Stochastics in Berlin for financial support and hospitality.

 \def\cprime{$'$} \def\cprime{$'$} \def\cprime{$'$} \def\cprime{$'$}
  \def\lfhook#1{\setbox0=\hbox{#1}{\ooalign{\hidewidth
  \lower1.5ex\hbox{'}\hidewidth\crcr\unhbox0}}} \def\cprime{$'$}
  \def\cprime{$'$} \def\cprime{$'$}

\end{document}